\newcommand{\congest}{\textsc{Congest}\xspace}
\newcommand{\del}{\textsc{Delete}\xspace}
\newcommand{\listc}{\mathsf{List}\xspace}
\newcommand{\listbatched}{\mathsf{BatchedList}\xspace}
\newcommand{\id}{\textsc{ID}\xspace}
\newcommand{\detect}{\mathsf{Detect}\xspace}
\newcommand{\local}{\textsc{Local}\xspace}
\newcommand{\new}{\textsc{New}\xspace}
\definecolor{mygreen}{RGB}{20,140,80}
\definecolor{linkcolor}{RGB}{0,0,230}
\definecolor{mylightgray}{RGB}{230,230,230}
\definecolor{verylightgray}{RGB}{245,245,245}
\newcommand{\etal}[0]{\textit{et al.}}
\newcounter{myalgctr}
\newtcolorbox{OuterBox}[1][]{%
    breakable,
    enhanced,
    frame hidden,
    interior hidden,
    left=-5pt,
    right=-5pt,
    top=-5pt,
    float=p,
    boxsep=0pt,
    arc=0pt
#1}%
\newtcolorbox{InnerBox}[1][]{%
    enforce breakable,
    enhanced,
    colback=gray,
    colframe=white,
#1}%
\newenvironment{tbox}{
\vspace{0.2cm}
\begin{tcolorbox}[width=\columnwidth,
                  enhanced,
                  boxsep=2pt,
                  left=1pt,
                  right=1pt,
                  top=4pt,
                  boxrule=1pt,
                  arc=0pt,
                  colback=white,
                  colframe=black,
	              breakable
                  ]%
}{
\end{tcolorbox}
}
\newcommand{\tboxhrule}[0]{\vspace{0.1cm} {\color{black} \hrule} \vspace{0.2cm}}
\newenvironment{titledtbox}[1]{\begin{tbox}#1 \tboxhrule}{\end{tbox}}
\algrenewcommand\algorithmicindent{1em}%
\newcommand{\tO}{\widetilde{O}}
\theoremstyle{plain}
\newtheorem{theorem}{Theorem}[section]
\newtheorem{corollary}[theorem]{Corollary}
\newtheorem{observation}[theorem]{Observation}
\crefname{theorem}{Theorem}{Theorems}
\Crefname{lemma}{Lemma}{Lemmas}
\Crefname{claim}{Claim}{Claims}
\Crefname{observation}{Observation}{Observations}
\Crefname{algorithm}{Algorithm}{Algorithms}
\Crefname{myalgctr}{Algorithm}{Algorithms}
\Crefname{challenge}{Challenge}{Challenges}
\Crefname{thm}{Theorem}{Theorems}
\begin{document}
\title{A Note on Improved Results for One Round Distributed Clique Listing}

\author{Quanquan C. Liu\thanks{Northwestern University, quanquan@northwestern.edu}}
\date{}

\maketitle

\begin{abstract}
In this note, we investigate listing cliques
of arbitrary sizes in bandwidth-limited, dynamic networks. The problem of
detecting and listing triangles and cliques was
originally studied in great detail by Bonne and Censor-Hillel (ICALP 2019).
We extend this study to dynamic graphs where more than one update may occur
as well as resolve an open question posed by Bonne and Censor-Hillel (2019).
Our algorithms and results are based on some simple observations
about listing triangles
under various settings and we show that we can list larger cliques using
such facts.
Specifically, we show that our techniques can be used to solve an open problem posed in the
original paper: we show that detecting and listing cliques (of any size)
can be done using $O(1)$-bandwidth after one round of communication
under node insertions and node/edge deletions. We conclude with an extension of our
techniques to obtain a small bandwidth $1$-round algorithm for listing
cliques when more than one node insertion/deletion and/or edge deletion
update occurs at any time.
\end{abstract}

\section{Introduction}
Detecting and listing subgraphs under limited bandwidth conditions is a fundamental
problem in distributed computing.
Although the static version of this problem has been studied by many
researchers in the past~\cite{Abboud17,CCG21,CGL20,
CPSZ21,ChangPZ19,CS19,CS20,
DruckerKO13, EFFKO19,FischerGKO18,FraigniaudMORT17,GonenO17,
HPZZ20,IzumiG17,KorhonenR17,Pandurangan18}
in both the upper bound and lower bound
settings, the dynamic version of such problems
often require different techniques. Triangle and clique listing are
problems where every occurrence of a triangle or clique in the graph is listed by
at least one node in the triangle or clique.
More specifically, the question
we seek to answer is: given a change in the topology
of the graph under one or more updates,
can we accurately list all cliques in the updated graph?
For certain settings, this question has an easy answer.
For example, in the \local model where communication is synchronous and error-free
and messages can have unrestricted size,
it is trivial for any node to list all $k$-cliques
adjacent to it after any set of edge insertion/deletion
and/or node insertion/deletion. 
In the \local model, every node would broadcast the
entirety of its adjacency list to all its neighbors each round.
Thus, each node can reconstruct the edges between its neighbors from these 
messages and it is easy to solve the clique listing problem using the reconstructed
neighborhood.

In the traditional \congest model, messages are passed between neighboring 
nodes in synchronous rounds where each message has size $O(\log n)$.
Detecting and listing triangles and cliques
in the \congest model turn out to be much harder problems.
This note focuses on triangle and clique \emph{listing} for which a number of previous
works provided key results. The summary of these results can be found in~\cref{table:congest-results}.
In the table, $\Delta$ is the maximum degree in the input graph. All of these results focus 
on the \emph{static} setting where the topology of the graph does not change.

\begin{table}[htb!]
    \centering
    \footnotesize
    \begin{tabular}{| c | c |}
    \toprule
         Problem & Rounds \\
         \hline
         Triangle Listing & \shortstack{$\tilde{O}(n^{3/4})$~\cite{IzumiG17}\\ $\tilde{O}(n^{1/2})$~\cite{ChangPZ19} \\
         $\tilde{O}(n^{1/3})$~\cite{CS19}}\\
         \hline
         Triangle Listing Lower Bound & $\tilde{\Omega}(n^{1/3})$~\cite{Pandurangan18,IzumiG17}\\
         \hline
         Deterministic Triangle Listing & \shortstack{$n^{2/3 + o(1)}$~\cite{CS20}\\
         $O(\Delta/\log n + \log \log \Delta)$~\cite{HPZZ20}}\\
         \hline
         $4$-Clique Listing & $n^{5/6 + o(1)}$~\cite{EFFKO19}\\
         \hline
         $4$-Clique Listing Lower Bound & $\tilde{\Omega}(n^{1/2})$~\cite{CzumajK20}\\
         \hline
         $5$-Clique Listing & \shortstack{$n^{73/75 + o(1)}$~\cite{EFFKO19}\\
         $n^{3/4 + o(1)}$~\cite{CGL20}}\\
         \hline
         $k$-Clique Listing & \shortstack{$\tilde{O}(n^{k/(k+2)})$ ($k \geq 4, k \neq 5$)~\cite{CGL20}\\
         $\tO(n^{1-2/k})$~\cite{CCG21}\\
         $n^{1-2/k + o(1)}$~\cite{CLV22}}\\
         \hline
         $k$-Clique Listing Lower Bound & \shortstack{$\tilde{\Omega}(n^{1-2/k})$~\cite{FischerGKO18}\\
         $\tilde{\Omega}(n^{1/2}/k)$ ($k \leq n^{1/2}$)~\cite{CzumajK20}\\
         $\tilde{\Omega}(n/k)$ ($k > n^{1/2}$)~\cite{CzumajK20}}\\
    \bottomrule
    \end{tabular}
    \caption{Results for subgraph listing problems in the \congest model.}
    \label{table:congest-results}
\end{table}

Additional works also provide lower bounds
under very small bandwidth settings~\cite{Abboud17,
IzumiG17,FischerGKO18,Pandurangan18}. A detailed description of all the aforementioned results
can be found in the recent
comprehensive survey of Censor-Hillel~\cite{censorhillel21}.

In this paper, we focus on the \emph{dynamic} setting
for subgraph listing problems
for which we are able to obtain $1$-round
algorithms that require $O(1)$ or $O(\log n)$ bandwidth.
In the dynamic setting, edge and node updates are applied to an initial (potentially empty)
graph. The updates change the topology of the graph.
All nodes in the initial graph know the graph's complete topology.
After the application of each round of updates, the nodes collectively
must report an accurate list of
the correct set of triangles or cliques.
We build off the elegant results of~\cite{BC19} who define
and investigate in great depth the question of detecting
and listing triangles and cliques in dynamic networks.
The model they present in their paper can capture
real-world behavior such as nodes joining or leaving the network
or communication links which appear or disappear between pairs of
nodes at different points in time.
In this paper, we make several observations about triangle
listing which may be used to re-prove results provided in their
paper as well as resolve an open question stated in their work.
Our paper is structured as follows. First, we formally define the model
in~\cref{sec:prelims}. Then, we summarize our results in~\cref{sec:contributions}.
\cref{sec:clique} gives our main result for listing $k$-cliques under 
node insertions/deletions and edge deletions. \cref{sec:wedges}
describes our result for listing wedges under node deletions and 
edge insertions/deletions. \cref{sec:batched-triangle} describes 
our result on triangle listing under batched updates, and~\cref{sec:batched-clique}
gives our result on clique listing under batched updates.
\section{Preliminaries}\label{sec:prelims}
The model we use in our paper is the same as the model used in~\cite{BC19}.
For completeness, we restate their model as well as the problem
definitions in this section.

The network we consider in this dynamic setting can be modeled as a sequence of graphs: $(G_0, \dots, G_r)$. 
The initial graph $G_0$ represents the starting state of the network. All nodes in the
initial graph $G_0$ know the graph's complete topology. Each subsequent graph in the sequence
is either identical to the preceding graph or differs from it by a single topology
change: either an edge insertion, edge deletion, node insertion (along with its
adjacent edges), or node deletion (along with its adjacent edges). Later, we
also consider graphs where each node may be adjacent to $O(1)$ multiple topology
changes. We denote the neighbors of node $v$ in the $i$-th graph
by $N_i(v)$; we omit the subscript $i$ when the current graph is 
obvious from context.

We assume the network is synchronized. In each round, each node can send to
each one of its neighbors a message containing $B$ bits where $B$ is defined as
the \emph{bandwidth} of the network. The messages sent to different neighbors 
can be different. In this paper, we focus on problems that
can be solved with bandwidth $B = O(1)$ or $B = O(\log n)$.

We assume that each node has a unique ID and each node in the network knows
the IDs of all its neighbors. In other words, we assume that each node
has an adjacency list containing the IDs of all its neighbors and knows
when a new node (previously not in its adjacency list) becomes connected
to it. Furthermore, we assume that the size of any ID is $O(\log{n})$ in
bits. Since in both~\cite{BC19} and our work, the number of vertices in the
graph can change dynamically, we define $n$ to be the number of vertices in the
graph in the current round.

Each round proceeds in three synchronous parts as follows. 

\begin{enumerate}
    \item At the start of each round, the topological change occurs. 
    Nodes are inserted/deleted from adjacency lists and new communication 
    links are established or destroyed between pairs of nodes. A node can
    determine whether an adjacent update occurred on an adjacent
    edge by comparing its current list of neighbors with its list of neighbors
    from the previous round. However, this also
    means that a node $u$ cannot distinguish between
    the insertion of an edge $(u, v)$ and the insertion of node $v$.
    \item Then, nodes exchange messages using (potentially new) communication links. 
    Nodes previously able to communicate
    may not continue to be able to communicate once the links are destroyed due to new deletions.
    \item Finally, nodes receive messages and list the triangles or cliques for this round.
\end{enumerate}

In this paper, we only deal with $1$-round algorithms or algorithms where
the output of each node in the graph following $1$ round of communication
after the last topological change is correct. We define the \emph{$1$-round
bandwidth complexity} of an algorithm to be the minimum bandwidth $B$
for which such a $1$-round algorithm exists. Our paper only deals with
\emph{deterministic} $1$-round algorithms.

We state a subgraph is \emph{created} if it is a subgraph that is created by
a new edge in the current round.
We state a subgraph is \emph{destroyed} if it is listed by at least one node in the 
previous round but no longer exists in the current round. When a node is deleted,
it can no longer send any messages or list any subgraphs in the round where
it is deleted (because it no longer exists). 

We solve the triangle and $k$-clique \emph{listing} problem in $O(1)$ or $O(\log n)$
bandwidth in this paper. We denote a clique with $s$ vertices by $K_s$.
Specifically, the problem $\listc(H)$
is defined to be the problem where given a generic unlabeled graph $H$, each labeled subgraph of the current
graph $G_i \in (G_0, G_1, \dots, G_r)$ that is isomorphic to $H$ must be listed by at least one node. 
A node \emph{lists} a subgraph if it lists the IDs of all nodes in the subgraph.
Furthermore, every listed subgraph must be isomorphic to $H$.
Naturally, all of our listing algorithms also apply to the \emph{detection}
setting where at least one node in the network detects the appearance of one or more $H$
(the problem denoted by $\detect(H)$ in~\cite{BC19}).
As shown in Observation 1 of~\cite{BC19},
$B_{\detect(H)} \leq B_{\listc(H)}$, and we only discuss
$\listc(H)$ in the following sections whose results
also transfer to $\detect(H)$.

\subsection{Batched Updates Model} 
We define a new model where more than one update can occur in each round. 
Specifically, for every node $v \in V_i$ in the current graph $G_i = (V_i, E_i)$, 
at most $O(1)$ updates are incident to $v$. This means that any node in 
$G_{i-1} = (V_{i-1}, E_{i-1})$ is incident to $O(1)$ new edge insertions and deletions in $G_i$. A node 
$u \in V_{i} \setminus V_{i-1}$ is a newly inserted node and \emph{all} edges 
incident to $u$ are newly inserted edges. Hence, $u$ is incident to $O(1)$ edges
when it is inserted. We denote the subgraph $H$ listing problem in this model
as $\listbatched(H)$. 

Notably, we do not need the assumption that each newly inserted node $v$ can tell which of its neighbors are also newly inserted nodes. 
This is because all edges incident to newly inserted nodes are newly inserted edges. Thus, a newly inserted node $v$ 
can send its entire adjacency list to all its neighbors. 
An important part of our model is 
that all nodes, including the newly inserted nodes, are adjacent to $O(1)$ updates so newly inserted nodes can send its adjacency list in 
$O(\log n)$ bandwidth to all its neighbors.

\section{Our Contributions}\label{sec:contributions}
All of the algorithms in our paper are deterministic $1$-round 
algorithms that use $O(1)$ or $O(\log n)$ bandwidth.
Specifically, we show the following main results in this paper.

We first answer Open Question 4 posed in~\cite{BC19}. The previous algorithm
given in~\cite{BC19} under this setting required $O(\log n)$ bandwidth.

\begin{restatable}{thm}{insertsthm}\label{thm:list-clique}
    The deterministic $1$-round bandwidth complexity of listing cliques,
    $\listc(K_s)$, under node insertions and node/edge deletions is $O(1)$.
\end{restatable}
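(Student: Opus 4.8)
The goal is a $1$-round algorithm with $O(1)$ bandwidth that lists every $K_s$ in the updated graph $G_i$, where the update is a single node insertion, node deletion, or edge deletion. The key structural fact to exploit is that $G_{i-1}$ is \emph{completely known} to every node present in $G_{i-1}$, so the only uncertainty introduced by the update is very local: a node deletion or edge deletion is \emph{negative} information that every affected node can infer on its own, and a node insertion brings in a brand-new vertex whose edges are the only genuinely new objects in the graph.

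\begin{proof}[Proof proposal]
The plan is to handle the three update types separately, in increasing order of difficulty, and in each case route the listing responsibility to a node that can verify a clique using only $O(1)$ bits of new information on top of its knowledge of $G_{i-1}$.

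\textbf{Edge deletion $(u,v)$.} No new clique is ever created by a deletion, so we only must ensure no \emph{destroyed} clique is reported. Every clique of $G_{i-1}$ is already known to all its vertices; a clique survives iff it does not contain both $u$ and $v$. Since $u$ (resp.\ $v$) detects that $(u,v)$ vanished by comparing adjacency lists, it suffices to have each node list exactly those cliques of $G_{i-1}$ through it that do not use a now-missing edge incident to it. This needs $0$ rounds of communication and $O(1)$ bandwidth; the only subtlety is fixing a deterministic tie-breaking rule (e.g.\ the minimum-ID vertex of a clique is its lister) so that a surviving clique is not dropped because its designated lister thinks some \emph{other} node will handle it --- but since all surviving cliques are entirely in $G_{i-1}$ and fully known, the min-ID rule applied to $G_i$-cliques works and can be evaluated locally once each endpoint of the deleted edge broadcasts a single bit "$(u,v)$ deleted" (or, in fact, nothing at all, since only $u$ and $v$ need this and they already know it).

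\textbf{Node deletion $v$.} Identical in spirit: the surviving cliques are exactly the $K_s$'s of $G_{i-1}$ avoiding $v$, all fully known to their vertices, and each neighbor of $v$ learns of the deletion from its adjacency list. Apply the min-ID lister rule over cliques of $G_i$; again $O(1)$ bandwidth, in fact $0$ communication, suffices.

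\textbf{Node insertion $v$ (the hard case).} Here $v$ arrives with a fresh ID and an adjacency list $N(v)$; the new $K_s$'s are precisely those containing $v$, i.e.\ sets $v \cup S$ where $S \subseteq N(v)$, $|S| = s-1$, and $S$ is a clique in $G_{i-1}$. The obstacle is that $v$ may have degree up to $n-1$, so $v$ cannot broadcast $N(v)$ (that is $\Theta(n\log n)$ bits), and $v$ does not know which subsets of $N(v)$ form cliques. The trick is to delegate verification to the neighbors of $v$ without $v$ sending anything large. Observe: a neighbor $u \in N(v)$ already knows the entire graph $G_{i-1}$, hence knows \emph{every} $(s-1)$-clique of $G_{i-1}$ contained in $N_{i-1}(u)$. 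What $u$ does not know is which of those $(s-1)$-cliques lie entirely inside $N(v)$. So have $v$ do the following in its single round: to each neighbor $u$, $v$ sends one bit confirming "you are my neighbor" --- but crucially this does not tell $u$ about $v$'s \emph{other} neighbors. To get around this, use the following $O(1)$-bandwidth scheme that leverages $v$ as a relay of Boolean answers rather than of its adjacency list: each $u \in N(v)$ will, for each candidate clique $C = u \cup S'$ with $S' \subseteq N_{i-1}(u)$ an $(s-2)$-clique, need to know whether all of $S'$ is adjacent to $v$; this is the conjunction over $w \in S'$ of the predicate "$w \in N(v)$". A vertex $w$ knows "$w \in N(v)$" from its own adjacency list. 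Thus route this conjunction through the clique-internal edges: within the $(s-1)$-clique $S = \{u\} \cup S'$ of $G_{i-1}$, all pairs are adjacent \emph{in $G_{i-1}$}, but those edges may have carried no $G_i$-message budget issue --- they are fine, each carries $O(1)$ bits. Let every vertex $w$ that is a neighbor of $v$ send the single bit "$1$" along \emph{all} its $G_{i-1}$-edges (to every other vertex); then each $u \in N(v)$, for each $(s-1)$-subset $S \ni u$ of $N_{i-1}(u)$ that is a clique in $G_{i-1}$, checks whether it received a "$1$" from every vertex of $S \setminus \{u\}$ and whether $u$ itself is in $N(v)$; if so, $u$ lists $v \cup S$. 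This costs each vertex one bit per incident edge, i.e.\ $O(1)$ bandwidth, and is exactly $1$ round.

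The correctness argument then has two halves. \emph{Soundness}: if $u$ lists $v \cup S$, then $S \cup \{u\}$ is an $(s-1)$-clique of $G_{i-1}$ by $u$'s knowledge, and every vertex of $S \cup \{u\}$ is in $N(v)$ (the ones in $S$ because each sent a "$1$" only if adjacent to $v$, and $u$ because it checked its own list), so $v \cup S \cup \{u\}$ is a genuine $K_s$ of $G_i$. \emph{Completeness}: any new $K_s$ of $G_i$ contains $v$ and $s-1$ further vertices forming a clique $T$ of $G_{i-1}$ all adjacent to $v$; pick $u = \arg\min_{x \in T} \mathrm{ID}(x)$ as its lister; $u$ sees $T$ as an $(s-1)$-clique in its own neighborhood in $G_{i-1}$ and receives "$1$" from every other vertex of $T$, so $u$ lists it. To avoid \emph{double-listing} a single new clique by several of its $u$-candidates, impose the min-ID rule so only the smallest-ID vertex of $S \cup \{u\}$ does the listing; each $u$ can evaluate this locally since it knows $S$. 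Finally, one must confirm that a node insertion does not also destroy any previously listed clique --- it cannot, since no edge is removed --- and that the previously-listed cliques of $G_{i-1}$ are re-listed under $G_i$'s min-ID rule, which holds because their vertex sets and IDs are unchanged and every such vertex still knows $G_{i-1}$.

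I expect the main obstacle to be making the node-insertion argument airtight with genuinely $O(1)$ bandwidth: the naive "$v$ broadcasts $N(v)$" fails, and the fix above works only because adjacency to $v$ is a predicate each \emph{neighbor} of $v$ knows about \emph{itself}, so it can be broadcast in one bit along pre-existing edges; one must double-check that the clique-internal edges of a surviving $(s-1)$-clique of $G_{i-1}$ are still present in $G_i$ (they are, under node insertion) so that these bits actually arrive. A secondary point to nail down is the deterministic lister-assignment rule and showing it simultaneously handles surviving old cliques and brand-new ones without omission or duplication.
\end{proof}
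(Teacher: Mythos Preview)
Your proposal rests on a model assumption that is not valid: you write that ``$G_{i-1}$ is \emph{completely known} to every node present in $G_{i-1}$,'' and later that ``a neighbor $u \in N(v)$ already knows the entire graph $G_{i-1}$.'' In the paper's model only $G_0$ is globally known; thereafter each node accumulates only $O(1)$ bits per neighbor per round. In particular, a node $c$ that was itself inserted in an earlier round does \emph{not} in general know which edges exist among its neighbors, and hence cannot enumerate the $(s-1)$-cliques in $N_{i-1}(c)$ that your node-insertion case requires. Concretely: start with $G_0 = (\{a,b\},\{\{a,b\}\})$, insert $c$ adjacent to $a,b$ in round $1$, insert $d$ adjacent to $a,b,c$ in round $2$, insert $e$ adjacent to $a,b,c,d$ in round $3$. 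If $c$ happens to have the minimum ID among $\{a,b,c\}$, your rule designates $c$ as the lister of the new $K_4 = \{a,b,c,e\}$; but $c$ never learned whether $\{a,b\}$ is an edge (your round-$1$ broadcast goes along $G_0$-edges only, none of which are incident to $c$), so $c$ cannot verify that $\{a,b,c\}$ is a triangle and the clique goes unlisted. The same issue breaks the edge-deletion and node-deletion cases, since ``every clique of $G_{i-1}$ is already known to all its vertices'' is likewise false.

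The paper's proof avoids this by never assuming global knowledge of $G_{i-1}$. Instead it maintains, across rounds, an explicit invariant about \emph{which triangles each node currently lists}: Observation~4.1 shows (by induction on insertions) that a node inserted in round $r$ lists every triangle it forms with nodes inserted in rounds $r' > r$, using only the one-bit \textsf{NEW}/\textsf{DEL} broadcasts of Algorithm~1. The designated lister of a clique is therefore its \emph{earliest-inserted} node (not the min-ID node), which by the invariant lists all triangles through it in the clique and hence, via Observation~4.2, the clique itself. Your one-bit ``I am adjacent to the new node'' broadcast is essentially the paper's \textsf{NEW} message; what is missing is the stateful triangle-listing invariant and the correct choice of lister (insertion order rather than ID), together with the inductive argument that this invariant survives deletions.
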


To prove this result, we also show a number of simple but new observations about triangle
listing in~\cref{sec:clique} which may prove to be useful
for more future research in this area.

In addition to cliques, we give an algorithm for wedges, which has not been considered
in previous literature.
An \emph{induced wedge} among a set of three nodes $\{u, v, w\}$ 
in the input graph 
is a path of length $2$ that uses all three nodes, and the induced subgraph of the three
nodes is not a cycle. 
We denote a wedge by $\Gamma$.

\begin{restatable}{thm}{wedgethm}\label{lem:wedge-listing}
    The deterministic $1$-round bandwidth complexity of listing wedges, 
    $\listc(\Gamma)$, under edge insertions and node/edge deletions,
    is $O(\log n)$.
\end{restatable}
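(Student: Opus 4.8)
The goal is to list all induced wedges (paths of length 2 on three nodes whose induced subgraph is not a triangle) after one round of $O(\log n)$-bandwidth communication, under edge insertions and node/edge deletions. The natural strategy is: each node that could be the *center* of a wedge detects newly created or destroyed wedges centered at it, and we argue that every affected wedge has a center that learns enough in one round.

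So my proof plan would be:

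1. Observe that a wedge $\{u,v,w\}$ with center $v$ is determined by: $v$ is adjacent to both $u$ and $w$, and $u,w$ are not adjacent. A node $v$ knows its own adjacency list, so it knows which pairs of its neighbors *could* form a wedge with it as center. What it doesn't know is whether those pairs are actually non-adjacent.

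2. For a wedge to be *created* in round $i$: some new edge caused it. Either (a) a new edge $(v,u)$ was inserted (and $v$ was already adjacent to some $w$ with $u,w$ non-adjacent), or (b) an edge $(u,w)$ was *deleted*, turning an existing triangle $\{u,v,w\}$ into a wedge centered at $v$. Symmetrically for destroyed wedges: either an edge was deleted (removing a wedge edge) or an edge $(u,w)$ was inserted (turning the wedge into a triangle), or a node was deleted.

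3. The key communication step. Since this is single-update (not batched), there is exactly one topological change. Each endpoint of a changed edge knows its entire neighbor list; I'd have the relevant nodes broadcast $O(1)$ IDs to coordinate. For instance, if edge $(u,v)$ is inserted, $u$ tells $v$ its ID (implicit — $v$ sees it in its list) and $v$ must check, for each of its other neighbors $w$, whether $u\sim w$: to do this $v$ sends $u$'s ID to each neighbor $w$, and $w$ replies whether $u$ is in its adjacency list (1 bit, or $O(\log n)$ to be safe). Then $v$ lists $\{u,v,w\}$ for each non-adjacent $w$. For the triangle-becomes-wedge case when $(u,w)$ is deleted: $u$ and $w$ each knew they were in a triangle with common neighbors; each common neighbor $v$ of $u,w$ now has a wedge centered at it. But after the deletion, can $u$ and $w$ still communicate with common neighbors? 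Yes — only edge $(u,w)$ was deleted, all edges $u\sim v$ and $w\sim v$ survive. So $u$ broadcasts to all neighbors "$w$ was just disconnected from me"; any $v$ receiving this that also has $w$ as a neighbor now knows $\{u,v,w\}$ is a wedge centered at $v$ and lists it. Need to double-check we don't list triangles: $v$ must verify $u\not\sim w$, which it now knows via the message.

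**Handling the subtleties and stating the obstacle.**

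4. Node deletion case. If node $x$ is deleted, every wedge containing $x$ is destroyed — but $x$ can't report anything. However, destroyed subgraphs only need to have been *listed in the previous round* and no longer exist; a node that was listing a wedge $\{x,u,v\}$ in round $i-1$ will simply see $x$ vanish from its adjacency list (or see the wedge gone) and stop listing it — "destruction" in this model seems to just require that no node lists a non-existent subgraph, which is automatic. I should check the exact definitional requirement in the model section: "destroyed if it is listed by at least one node in the previous round but no longer exists" — this seems to be a definition, not an obligation to *announce* destruction. So the real content is only: every *currently existing* wedge in $G_i$ must be listed by some node in round $i$, and we maintain a consistent assignment (e.g., always list a wedge at its center) so that the only wedges whose listing-status changes are those incident to the single update, and those are handled by steps 2–3.

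5. The main obstacle. The trickiest case is an **edge insertion $(u,w)$ that destroys a wedge by completing a triangle**: the wedge $\{u,v,w\}$ centered at $v$ was being listed by $v$, and now $v$ must *stop* listing it. Does $v$ find out? After inserting $(u,w)$, both $u$ and $w$ gain each other as neighbors but no communication link $u$–$v$ or $w$–$v$ is affected. So $u$ can broadcast "$w$ is now my neighbor" to all of its neighbors; any common neighbor $v$ of $u$ and $w$ receives this and knows the former wedge is now a triangle, so it removes it from its list. This works. The genuinely subtle point is making sure **the assignment of which node lists which wedge is stable across rounds** so that a single update only perturbs $O(1)$ listing decisions per node — I'd fix the rule "the center lists the wedge" and note the center is unique for an induced wedge (the unique degree-2 vertex of the path), so there is never ambiguity, and in the previous round every wedge was correctly listed by its center by induction. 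Then the whole argument is a case analysis over the four update types, in each case identifying which nodes' listings change and verifying each such node receives the $O(\log n)$-bit message needed to update correctly. The bandwidth is $O(\log n)$ because every message is a constant number of node IDs.

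**Why $O(\log n)$ and not $O(1)$.** Unlike the clique result, here the center $v$ must *name* the other wedge vertex $u$ to its neighbor $w$ ("is $u$ your neighbor?"), which costs $\Theta(\log n)$ bits; there seems to be no way to compress this to $O(1)$, which is presumably why the theorem states $O(\log n)$ rather than $O(1)$. I would include a remark to that effect but the construction above suffices for the upper bound claimed.
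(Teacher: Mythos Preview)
Your proposal contains a genuine gap: the ``center lists the wedge'' invariant cannot be maintained by a $1$-round protocol with $O(\log n)$ bandwidth, and the mechanism you describe is actually a two-round protocol.

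Concretely, you write that after an edge insertion $(u,v)$, the center $v$ ``sends $u$'s ID to each neighbor $w$, and $w$ replies whether $u$ is in its adjacency list.'' In the model all messages in a round are sent simultaneously; $w$ cannot see $v$'s query before choosing what to send to $v$. So the reply requires a second round. Nor can the center recover this information by maintaining state: consider nodes $a,b,v$ initially isolated, then insert $(a,b)$ in round~1, $(v,a)$ in round~2, and $(v,b)$ in round~3. After round~3, $\{a,v,b\}$ is a triangle, not a wedge. But $v$ was adjacent to neither $a$ nor $b$ in round~1 and so never learned that $(a,b)$ exists; in round~3 the only messages $v$ receives are $(\new,ID_v)$ from $b$ and nothing relevant from $a$ (whose neighborhood did not change). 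Under your rule, $v$ would incorrectly list $(a,v,b)$ as a wedge. The same obstruction blocks any attempt to have the center maintain ``adjacency among my neighbors'' as an invariant: whenever $v$ acquires a new neighbor $u$, $v$ has no $O(\log n)$-bit way to learn $u$'s pre-existing adjacencies in a single round.

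The paper avoids this by \emph{not} assigning the wedge to its center. When edge $(u,v)$ is inserted, $v$ broadcasts $(\new,ID_u)$; each pre-existing neighbor $x$ of $v$ receives this and checks \emph{locally} whether $u\in N(x)$. If not, the endpoint $x$ lists the wedge $(x,v,u)$; if so, $x$ records the triangle $\{x,v,u\}$. Triangle listings are carried as auxiliary state precisely so that when an edge of a triangle is later deleted, the node that was listing the triangle can immediately list the resulting wedge. Thus the listing responsibility sits with whichever node can verify the non-edge locally, which is an endpoint rather than the center. Your case analysis for destructions (steps~4--5) is fine; the error is specifically in how newly created wedges get listed after an edge insertion.
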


Censor-Hillel \etal~\cite{CKS21} studied the setting of \emph{highly} dynamic
networks in which the number of topology changes per round is \emph{unlimited}.
In this setting, they showed $O(1)$-amortized round complexity algorithms for
$k$-clique listing and four and five cycle listing. In our $1$-round \congest setting,
we show the following two theorems when multiple updates occur in the same round.
Our results are the first to consider more than one update in the $1$-round, low bandwidth
setting.

\begin{restatable}{thm}{triangleslist}\label{lem:batched}
    The deterministic $1$-round bandwidth complexity of listing triangles, $\listbatched(K_3)$,
    when each node is incident to $O(1)$ updates, 
    is $O(\log n)$ under node/edge insertions and node/edge deletions.
\end{restatable}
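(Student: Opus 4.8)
The plan is to build a \emph{stateful} one-round protocol: each node remembers, from round to round, the set of triangles it is currently responsible for listing, together with the three IDs of each such triangle and, for each, the identity of the edge not incident to it. For a node $v$ surviving into $G_i$, write $I_v$ for the set of neighbors reached by an edge inserted in the current round (when $v$ itself was just inserted this is all of $N_i(v)$) and $D_v$ for the set of neighbors $v$ lost this round, whether through an edge deletion or a neighbor's deletion. The batched-update hypothesis gives $|I_v|,|D_v| = O(1)$, and $v$ computes both sets locally by comparing consecutive adjacency lists. In the single communication round, every node sends the pair $(I_v,D_v)$ — which is $O(1)$ IDs, hence $O(\log n)$ bits — to each of its current neighbors (communication over newly inserted links is allowed in the same round).

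The listing rule would have two parts. For \emph{maintenance}: a node $v$ keeps listing a triangle $T=\{v,p,q\}$ from the previous round iff $p,q$ are still in $N_i(v)$ and $v$ receives neither ``$q\in D_p$'' nor ``$p\in D_q$'', and drops $T$ otherwise. For \emph{creation}: I would first observe that a triangle of $G_i$ is new (absent from $G_{i-1}$) exactly when at least one of its edges was just inserted, since an inserted vertex only brings inserted edges. For such a new $T=\{v,p,q\}$, call $v$ a \emph{witness} if the edge $(p,q)$ opposite $v$ was just inserted; then $v$ confirms $T$ is a triangle of $G_i$ from its own adjacency list (for $(v,p),(v,q)$) together with the message $q\in I_p$ it received from its neighbor $p$ (for $(p,q)$), and moreover $v$ knows the insert/not-insert status of all three edges of $T$ — the two incident ones directly, the opposite one by being a witness — so it can compute the entire witness set of $T$; the smallest-ID witness lists $T$ and records it. The initial round is handled by having every node (which knows $G_0$) assign each triangle to its smallest-ID vertex.

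Correctness would then be an induction on the round index, with invariant ``every triangle of $G_i$ is listed by exactly one node, and that node stores the opposite edge of its triangle.'' The steps I would verify are: (1) the witness set of a new triangle is nonempty — if $(a,b)$ was inserted then $c$ is a witness — so every new triangle is listed; (2) all witnesses of a new triangle agree on the lister, since each of them knows the full insertion pattern of the triangle's three edges; (3) a triangle of $G_{i-1}$ destroyed in $G_i$ is dropped by its unique lister — either the lister was deleted, or an incident edge of the triangle left its adjacency list, or it received the relevant $D$-message from a partner that is still its neighbor, which holds whenever the destruction is due to the opposite edge being deleted; and (4) a triangle of $G_{i-1}$ that survives stays with its old lister, which still sees its two incident edges and learns from the $D$-messages whether the stored opposite edge was deleted, while no one else lists it (the triangle has no inserted edge, so the creation rule is silent, and by induction no one else holds it in state). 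No false positives arise because a node lists $T$ only after confirming all three of its edges are present in $G_i$, and no double-listing arises because a triangle with an inserted edge did not exist in $G_{i-1}$.

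I expect the main obstacle to be pinning down, in step (2), why ``witness = vertex opposite a just-inserted edge'' is the \emph{right} and self-consistent notion. The key sub-claim is that for a \emph{genuinely new} triangle a vertex cannot vouch for an edge of the triangle that is not incident to it unless that edge was just inserted: the only other way a node would ``know'' such a non-incident edge is from having tracked, in an earlier round, a triangle on the very same three vertices — impossible once that triangle is new. This is exactly what lets all witnesses agree on the lister with no extra communication, and it also makes node insertions fall out for free, since a freshly inserted node's edges are all inserted and it effectively broadcasts its constant-size adjacency list through $I_v = N_i(v)$. Everything else is bookkeeping; the messages are $O(\log n)$ bits, the protocol is deterministic and uses one round, which gives the stated bound.
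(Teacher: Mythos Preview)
Your proposal is correct and uses the same communication primitive as the paper: every node broadcasts to its current neighbors the $O(1)$ IDs in $I_v$ and $D_v$, costing $O(\log n)$ bits. Where you diverge is in the analysis and the invariant you maintain. The paper is content with the weaker guarantee the problem actually demands --- each triangle listed by \emph{at least} one node --- so its argument is short: for a new edge $\{u,v\}$, any common neighbor $w$ that receives $ID_u$ from $v$ lists $\{u,v,w\}$; for a destroyed triangle, the lister either sees an incident deletion directly or receives the opposite-edge deletion via the $D$-messages from both partners. There is no tie-breaking, no state beyond ``which triangles do I currently list,'' and multiple nodes may redundantly list the same triangle.

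Your route instead proves the stronger invariant that each triangle is listed by \emph{exactly one} node, via the smallest-ID witness rule. This requires the extra observation you flag as the ``main obstacle'': that every witness of a new triangle can locally reconstruct the full insertion pattern of the three edges and hence the full witness set. That observation is correct (a witness knows its two incident edges' status from $I_v$, and the opposite edge's status by virtue of being a witness), and your steps (1)--(4) go through. The payoff is a non-redundant listing, which the paper does not attempt; the cost is more bookkeeping. Note that storing ``the identity of the edge not incident to it'' is redundant once you store the triangle's three IDs and know which vertex is the lister.
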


\begin{restatable}{thm}{cliqueslist}\label{lem:batched-clique}
    The deterministic $1$-round bandwidth complexity of listing cliques of size $s$, $\listbatched(K_s)$,
    when each node is incident to $O(1)$ updates, 
    is $O(\log n)$ under node insertions and node/edge deletions.
\end{restatable}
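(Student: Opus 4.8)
The plan is for every node to list precisely those $s$-cliques of $G_i$ that contain it and that it can locally certify edge by edge, and to argue that one round of $O(\log n)$-bit messages suffices for this. Recall, as in the setting of \cref{thm:list-clique}, that each node knows the topology of the graph immediately before the current batch of updates. The $O(1)$-updates-per-node hypothesis provides two facts: a newly inserted node $v$ has only $O(1)$ incident edges, so it can broadcast its whole adjacency list $N_{G_i}(v)$ — a list of $O(1)$ IDs — to each neighbor; and any node $u$ already present is incident to only $O(1)$ deletions, so it can broadcast to all its current neighbors the set $L_u$ of its former neighbors that are no longer adjacent to it. These are the only messages, and each is a list of $O(1)$ IDs, hence $O(\log n)$ bits.

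The first step is to check that after this exchange every node $u$ knows the subgraph of $G_i$ induced on $\{u\}\cup N_{G_i}(u)$. Edges among old neighbors of $u$ are known from the pre-update graph, corrected by the deletions: an endpoint $x$ of a deleted old edge either is still adjacent to $u$, in which case $u$ receives $L_x$ and sees the deletion, or is no longer adjacent to $u$, in which case $u$ discards every candidate clique through $x$ anyway. Edges incident to a new neighbor $v$ of $u$ are read off from the list $v$ broadcasts to $u$, and $v$ broadcasts to $u$ whenever $v$ shares a clique with $u$. Hence $u$ can enumerate and certify every $s$-clique of $G_i$ through $u$; it lists exactly these.

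For correctness, a clique of $G_i$ already present in $G_{i-1}$ is listed by each of its (old, mutually adjacent) vertices. A \emph{new} clique must differ from $G_{i-1}$ in a vertex or an edge; since every edge of $E_i\setminus E_{i-1}$ is incident to a newly inserted node, in either case the clique contains a newly inserted node, and its old vertices $S$ form a clique already present before the batch (their mutual edges are old and survive into $G_i$). Any vertex of the clique therefore knows that $S$ is an old clique and reads the remaining adjacencies from the new vertices' broadcasts, so it certifies and lists the clique. Conversely nothing spurious is listed: a node certifies a candidate only after checking all $\binom{s}{2}$ edges against $G_i$, and the checks are sound — a deleted internal edge is either reported via the relevant $L$-set or causes the candidate to be dropped. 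In particular a destroyed clique of $G_{i-1}$ is simply not re-listed by anyone.

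The step I expect to need the most care is the interplay among several simultaneous updates, which is what distinguishes this from the single-update result of \cref{thm:list-clique}: a clique may be destroyed by two unrelated edge deletions or assembled from two distinct new nodes attached to a common old clique, so any scheme that bookkeeps per update would miss such cliques; having every node reason directly about the cliques of $G_i$ through itself is what avoids this. The remaining obligations are routine but must be verified: that every message a node sends is a single list of $O(1)$ IDs and therefore fits in $O(\log n)$ bits (a high-degree node may \emph{receive} many such lists, which the model allows); that a newly inserted node can tell it is new — it carries no state from the previous round — and so knows to broadcast its adjacency list; and that the degenerate cases $s=3$ and all-new cliques (where $S$ is empty or a single edge) are handled correctly.
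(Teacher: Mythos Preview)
Your algorithm is sound, but the proof rests on a premise that does not hold in the model. You assert that ``each node knows the topology of the graph immediately before the current batch of updates,'' and from this conclude that after one exchange ``every node $u$ knows the subgraph of $G_i$ induced on $\{u\}\cup N_{G_i}(u)$.'' Only $G_0$ is globally known; thereafter the invariant you need fails. Concretely: a node $u$ inserted in round $j$ with two old neighbors $v,w$ receives from $v$ and $w$ only their deletion lists $L_v,L_w$, so $u$ never learns whether $\{v,w\}$ is an edge --- not in round $j$, and not in any later round, since $v,w$ will continue to send only deletion lists. Hence in round $j{+}1$ the node $u$ is ``old'' but does \emph{not} know the induced subgraph on its closed neighborhood, and your inductive step (``edges among old neighbors of $u$ are known from the pre-update graph'') breaks. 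The sentence ``any vertex of the clique therefore knows that $S$ is an old clique'' is likewise false for every vertex inserted after the edges of $S$ were laid down.

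What actually holds --- and what the paper proves --- is the weaker statement that \emph{some} vertex of each clique $C$ can certify it: namely the vertex $v^*\in C$ inserted earliest (or any vertex of $C$ present in $G_0$). For every other $y\in C$, the edge $\{v^*,y\}$ was created at $y$'s insertion, so $v^*$ was adjacent to $y$ then and received $y$'s broadcast, which contained every $x\in C$ already present. Accumulating these broadcasts across rounds, $v^*$ learns every edge of $C$; deletion lists from surviving neighbors keep this knowledge current. This is precisely the paper's route (phrased there via \cref{lem:batched-obs-1,lem:batched-obs-2}: the earliest node lists every incident triangle, hence by \cref{obs:list-clique-triangles} the clique). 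Your messages suffice for this argument --- indeed you can drop the paper's insertion broadcasts from old nodes --- but your proof must replace the false ``every node knows its induced neighborhood'' claim with the ``earliest node accumulates all clique edges'' invariant.
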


\section{Clique Detection and Listing}\label{sec:clique}

In this section, we show our main result by providing a simple upper bound for the $1$-round
bandwidth complexity of $\listc(K_s)$ under node insertions
and node/edge deletions. Specifically, we show
that $\listc(K_s)$ has $1$-round $O(1)$-bandwidth complexity.
This directly resolves Open Question $4$
in~\cite{BC19}. We describe our algorithm below and give its pseudocode in~\cref{alg:list-cliques}.
To begin as a simple warm-up
illustration of our technique, we re-prove
Theorem 3.3.3 of~\cite{BC19} using only the node insertion portion of the algorithm.

\SetKwFunction{FnListCliques}{ListCliques}
\begin{algorithm}[!t]
    \caption{\label{alg:list-cliques} Listing Cliques}
    \textbf{Input:} Update $U$ which can be a node insertion, node deletion, or edge deletion.\\ 
    \textbf{Output:} Each $K_s$ is listed by at least one of its nodes and no set of 
    $s$ nodes which is not a $K_s$ is wrongly listed.\\
    \Fn{\FnListCliques{$U$}}{
        \If{$U$ adds $u$ to $N(v)$}{\label{clique:add-adj}
            $v$ sends $\new$ to all $w \in N(v)$.\label{clique:send-new}\\
        } 
        \If{$w$ receives $\new$ from $v$ and $u$ is added to $N(w)$}{\label{clique:receive-new}
            $w$ starts listing new triangle $\{u, v, w\}$.\label{clique:start-listing}\\
        }
        \If{$U$ deletes $u$ from $N(v)$}{\label{clique:destroy-adj}
            $v$ sends $\del$ to all $w \in N(v)$.\label{clique:send-del}\\
            \For{$w \in N(v)$}{
                \If{$v$ lists triangle $\{u, v, w\}$}{\label{clique:list-triangle}
                    $v$ stops listing triangle $\{u, v, w\}$.\label{clique:stop-listing}
                }
            }
        }
        \If{$v$ receives $\del$ from $u$ and $w$}{\label{clique:receive-2-del}
            \If{$v$ lists triangle $\{u, v, w\}$}{\label{clique:list-new-2}
                $v$ stops listing triangle $\{u, v, w\}$.\label{clique:stop-list-2}
            }
        }
        \For{all $v \in V$}{\label{clique:iterate-node}
            \For{every $S \subseteq N(v)$ of $s-1$ neighbors}{\label{clique:every-subset}
                \If{$v$ lists $\{v\} \cup A$ as a triangle for every subset of $2$ nodes $A = \{a, b\}\subseteq S$ }{\label{clique:all-subsets-triangle}
                    $v$ lists $S \cup \{v\}$ as a $K_s$.\label{clique:list-new-clique}
                }
            }
            $v$ stops listing every clique containing a destroyed triangle.\\
        }
    }
\end{algorithm}

In~\cref{alg:list-cliques}, for every update which adds a node $u$ to an adjacency list $N(v)$ (\cref{clique:add-adj}), 
$v$ sends an $O(1)$ sized message to every neighbor $w \in N(v)$ (\cref{clique:send-new}). If any neighbor $w$ also 
gets $u$ added to its adjacency list $N(w)$ (\cref{clique:receive-new}), then $w$ lists triangle $\{u, v, w\}$.
If instead $u$ is deleted from $N(v)$ (\cref{clique:destroy-adj}), then $v$ first sends $\del$ to all its neighbors
$w \in N(v)$ (\cref{clique:send-del}). Furthermore, for every neighbor $w \in N(v)$, 
if $v$ lists triangle $\{u, v, w\}$ (\cref{clique:list-triangle}),
then $v$ stops listing $\{u, v, w\}$ (\cref{clique:stop-listing}). If a node $v$ receives a $\del$ message from two 
of its neighbors (\cref{clique:receive-2-del}) and if it lists $\{u, v, w\}$ as a triangle (\cref{clique:list-new-2}),
then it stops listing $\{u, v, w\}$ (\cref{clique:stop-list-2}). Finally, after receiving all the messages for this round,
every vertex $v \in V$ determines the subsets $S \subseteq N(v)$ of $s-1$ of its neighbors (\cref{clique:every-subset}) where 
every subset of $2$ nodes of $\{a, b\} \subseteq S$ including itself, $\{a, b\} \cup \{v\}$, 
is a triangle (\cref{clique:all-subsets-triangle}); each such $S \cup \{v\}$ is a
$K_s$ and $v$ lists it as a $K_s$. Note that~\cref{alg:list-cliques} treats a node
that is deleted and reinserted like a new node when it is reinserted (i.e.\ 
nodes do not distinguish between a completely new neighbor from one that is deleted and reinserted
at a later time). 

We now prove a set of useful observations which we use to prove our main theorem regarding $\listc(K_s)$. The first
observation below states that any node inserted in round $r$ will list every triangle containing it that 
is formed after its insertion. This is the crux of our analysis since this combined with our other
observations states that every clique \emph{is listed by its earliest inserted node.}

\begin{observation}\label{lem:node-insertion-triangle}
    Under node insertions and node/edge deletions,
    suppose $v$ is inserted in round $r$ and $u$ is inserted in round
    $r' > r$. If $\{u, v, w\}$ is a triangle in round $r'$, then in 
    round $r'$, $u$ lists $\{u, v, w\}$ using $O(1)$ bandwidth. 
\end{observation}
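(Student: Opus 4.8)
The plan is to trace through \cref{alg:list-cliques} and show that the triangle $\{u,v,w\}$ gets listed precisely by $u$ in round $r'$, using only the node-insertion portion (\crefrange{clique:add-adj}{clique:start-listing}). First I would observe that since $u$ is inserted in round $r' > r$, both $v$ and $w$ already exist in the graph at the start of round $r'$ (they were present before, because for a triangle $\{u,v,w\}$ to exist in round $r'$ with $u$ freshly inserted, the edges $(u,v)$ and $(u,w)$ are brand new, and the edge $(v,w)$ must be present in round $r'$). I would split into the two cases for how the edge $(v,w)$ comes to exist: either $(v,w)$ was already present before round $r'$, or $(v,w)$ is also inserted in round $r'$ — but the latter cannot happen under node insertions and node/edge deletions, because the only updates that add edges are node insertions, and $v, w$ are both old nodes, so no update in round $r'$ can create the edge $(v,w)$. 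Hence $(v,w) \in E_{r'-1}$.

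Next I would argue about the messages sent in round $r'$. When $u$ is inserted in round $r'$, this is a single node-insertion update; from the perspective of $v$, the node $u$ is added to $N(v)$, so the condition on \cref{clique:add-adj} fires at $v$, and $v$ sends $\new$ to all its neighbors $w \in N(v)$ on \cref{clique:send-new}. Since $(v,w)$ is an edge present at the start of round $r'$ and was not deleted (we are told $\{u,v,w\}$ is a triangle in round $r'$, so $(v,w)$ survives), $w \in N(v)$ in round $r'$ and therefore $w$ receives this $\new$ message. Symmetrically, $u$ is also added to $N(w)$ in round $r'$. The key point is then the trigger on \cref{clique:receive-new}: $w$ receives $\new$ from $v$ \emph{and} $u$ is added to $N(w)$ in the same round — both hold — so $w$ lists the triangle $\{u,v,w\}$ on \cref{clique:start-listing}.

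There is a subtlety: the statement claims \emph{$u$} lists $\{u,v,w\}$, but the argument above shows $w$ (and symmetrically $v$) lists it. The resolution is symmetry of the roles: apply the same reasoning with $u$ playing the role of ``$w$'' in the algorithm. Concretely, when $u$ is inserted, $v$ is added to $N(u)$ and $w$ is added to $N(u)$; moreover $v$ and $w$ each send $\new$ to all their neighbors (each of them has the freshly-inserted node $u$ added to its adjacency list, triggering \cref{clique:send-new}), and $u \in N(v) \cap N(w)$ in round $r'$, so $u$ receives $\new$ from both $v$ and $w$. Then the condition on \cref{clique:receive-new} fires at $u$ with the $\new$ coming from, say, $v$ while $w$ is the newly-added neighbor (and also with the roles of $v,w$ swapped), so $u$ lists $\{u,v,w\}$ on \cref{clique:start-listing}. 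Since $u$ is incident to only one update in this model — its own insertion — the message $\new$ is the only message $v$ and $w$ need to send on $u$'s behalf, so the bandwidth used is $O(1)$: each message is a single constant-size symbol. The main obstacle, and the thing to state carefully, is exactly this case analysis ensuring that $(v,w)$ is an \emph{old} surviving edge so that all three nodes are mutually in each other's round-$r'$ adjacency lists and hence all the $\new$ messages actually arrive; once that is pinned down, the rest is a direct read-off of the algorithm.
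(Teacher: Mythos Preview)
Your first two paragraphs correctly show that $v$ (and symmetrically $w$) lists the triangle, and this is exactly the argument the paper gives. In fact the paper's own proof concludes ``Thus, $v$ lists the triangle containing $w$ and $u$,'' and every subsequent use of the observation (e.g.\ \cref{thm:list-node-insert}, \cref{cor:list}, \cref{thm:list-clique}) relies on the \emph{earlier}-inserted node doing the listing. The ``$u$'' in the statement is almost certainly a typo for ``$v$.''

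Your attempt to also make $u$ list the triangle, however, does not work. The message $\new$ is a single bit: it tells the receiver only that the sender acquired a new neighbor, not who that neighbor is. For an old node this is enough, because in the single-update model an old node has at most one new neighbor and can identify it from its own adjacency list. But the freshly inserted node $u$ has \emph{all} of its neighbors newly added, so when $u$ receives $\new$ from $v$ it cannot pair that message with $w$ rather than with any other neighbor of $u$. More fundamentally, $u$ has no information about whether the edge $(v,w)$ exists: nothing in the $O(1)$-bit messages reveals edges among $u$'s neighbors. Your reading of \cref{clique:receive-new} with ``$w$ as the newly-added neighbor'' would have $u$ list $\{u,v,x\}$ for every $x\in N(u)$, which is incorrect. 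So the right fix is to accept that the statement should say $v$, not to force $u$ to list.

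Finally, note that the paper's proof has a second half you omit: it verifies that any node listing a triangle correctly \emph{stops} listing it after an edge or node deletion destroys it (via \cref{clique:send-del}--\cref{clique:stop-list-2}). This is needed because the observation is stated under node insertions \emph{and} node/edge deletions, and later arguments rely on listings being accurate across deletions.
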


\begin{proof}
    We prove this via induction. We assume our observation holds
    for the $j$-th node insertion and prove it for the
    $(j+1)$-st insertion. The observation trivially holds for the
    initial graph since all nodes know the topology of the
    initial graph and thus lists all triangles they are
    part of. Let $v$ be a node inserted in
    round $r \leq j$ and $u$ be a node
    inserted in round $j + 1$. If $u$ forms a triangle with
    $v$ and another node $w$, when node $u$ is inserted, it
    must contain an edge to both $v$ and $w$ and edge $\{v, w\}$
    must already exist in the graph. Edge $\{v, w\}$ must already
    exist in the graph since edges are only inserted as a
    result of node insertions; hence, if an edge between $v$ and
    $w$ exists it must have been inserted when either $v$ or $w$
    was inserted (or it existed in the initial graph),
    whichever one was inserted later. When a node
    receives a new neighbor, it sends $\new$ to all its neighbors 
    indicating it received a new neighbor (\cref{clique:send-new} of~\cref{alg:list-cliques}). 
    Since both $v$ and $w$ are adjacent and send
    $\new$ to each other, $v$ receives $\new$
    from $w$ and knows that $u$ must be its new neighbor. Thus,
    $v$ lists the triangle containing $w$ and $u$. Because this holds
    for all vertices inserted in round $r \leq j$, we proved that every 
    node inserted in round $r \leq j$ correctly lists every triangle 
    created in round $j + 1$ that contains it.

    We now show that $v$ does not incorrectly list a set of three nodes $\{u, v, w\}$
    as a triangle when the set is not a triangle. This means that we show on an edge
    or node deletion, $v$ stops listing triangles that are destroyed.
    We first show that any node which
    lists a triangle will also successfully list its deletion
    after an edge deletion. First, if an edge deletion removes a neighbor of 
    $v$ that is part of a triangle that $v$ lists, then $v$ stops listing this triangle (\cref{clique:stop-listing}).
    Node $v$ also sends $\del$ to all its neighbors (\cref{clique:send-del}).
    Suppose the edge deletion happens between $u$ and $v$ and $w$ lists triangle $\{u, v, w\}$.
    Then, $w$ receives $\del$ from $u$ and $v$ in round $j + 1$ and knows that edge $\{u, v\}$
    is deleted; $w$ then stops listing $\{u, v, w\}$ (\cref{clique:stop-list-2,clique:list-new-2,clique:receive-2-del}).
    Hence, after an edge deletion in round $j + 1$, all nodes which lists a triangle destroyed
    by the deletion stops listing the triangle.
    What remains to be shown is that a node which lists
    a triangle can successfully list its deletion after a node
    deletion. When a triangle is destroyed
    due to a node deletion, one of its nodes must have been deleted.
    Hence, any node $u$ which lists a triangle $\{u, v, w\}$
    and sees the deletion of one of its two neighbors
    from its adjacency list will stop listing it.
\end{proof}

\begin{observation}[\cite{BC19}]\label{obs:list-clique-triangles}
    If a node lists all the triangles containing it within a clique
    of size $k \geq 3$, then the node lists the clique.
\end{observation}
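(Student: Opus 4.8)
The plan is to obtain this immediately from the final loop of \cref{alg:list-cliques} (\cref{clique:iterate-node,clique:every-subset,clique:all-subsets-triangle,clique:list-new-clique}), using nothing more than the definition of a clique. Write $s = k$ for the clique size, fix a clique $K$ of size $k$ containing the node $v$ in question, and set $S := K \setminus \{v\}$, so that $|S| = k-1$. Because $K$ is a clique, every vertex of $S$ is adjacent to $v$, hence $S \subseteq N(v)$, and therefore $S$ is one of the $(s-1)$-element subsets of $N(v)$ that $v$ enumerates on \cref{clique:every-subset}.

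Next I would check that $S$ satisfies the condition on \cref{clique:all-subsets-triangle}. Take any pair $A = \{a,b\} \subseteq S$. Since $v, a, b$ all belong to the clique $K$, they are pairwise adjacent, so $\{v\} \cup A$ is a triangle of the current graph, and it is a triangle containing $v$ that lies inside $K$. By hypothesis $v$ lists every such triangle, so $v$ lists $\{v\}\cup A$. As this holds for all $\binom{k-1}{2}$ pairs $A \subseteq S$, the condition on \cref{clique:all-subsets-triangle} is met, and \cref{clique:list-new-clique} makes $v$ list $S \cup \{v\} = K$ as a $K_k$. The case $k = 3$ is the degenerate instance of the same argument: $S$ is a single pair, the only choice of $A$ is $S$ itself, and ``$v$ lists the triangle $\{v\}\cup A$'' is literally ``$v$ lists $K$''.

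The only point that needs a word of care — and it is really the whole content of the observation — is the current-round reading of ``$v$ lists a triangle'': in \cref{alg:list-cliques} a triangle is held in $v$'s list exactly while it exists (it enters the list at creation and is removed at destruction), so the hypothesis ``$v$ lists all triangles containing it within $K$'' is precisely the per-round invariant that the clique step consumes, and no induction or extra bookkeeping is needed. I do not anticipate a genuine obstacle; the statement is essentially the remark that $\listc(K_k)$ reduces, at each node and in each round, to listing the triangles through that node, which is exactly what the last loop of the algorithm implements. (Correctness in the other direction — that $v$ never outputs a set of $s$ nodes that is not a $K_s$ — is not claimed by this observation and is handled separately in the proof of the surrounding theorem.)
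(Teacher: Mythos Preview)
Your proposal is correct and rests on the same idea as the paper: the triangles through $v$ inside $K$ are exactly the sets $\{v,a,b\}$ for pairs $a,b\in K\setminus\{v\}$, so listing all of them suffices to recover $K$. The only cosmetic difference is that you verify this by tracing the final loop of \cref{alg:list-cliques} explicitly, whereas the paper phrases it information-theoretically (listing those triangles means $v$ knows every edge among the clique's vertices and can therefore compute and output the clique).
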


\begin{proof}
This observation was made in~\cite{BC19}. If
a node lists all triangles containing it within a clique, then
it lists all edges between the nodes in each of these triangles.
From this information, it can compute and list every
clique that may be formed from the set of nodes in these
triangles.
\end{proof}

\begin{theorem}[Theorem 3.3.3~\cite{BC19}]\label{thm:list-node-insert}
    The deterministic $1$-round bandwidth complexity of $\listc(K_s)$
    under node insertions is $O(1)$.
\end{theorem}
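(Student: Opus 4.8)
The plan is to read off \cref{thm:list-node-insert} from \cref{lem:node-insertion-triangle} and \cref{obs:list-clique-triangles}, using only the node-insertion portion of \cref{alg:list-cliques}: the lines that react to a new neighbor (up through \cref{clique:start-listing}) and the clique-assembly loop beginning at \cref{clique:iterate-node}. The cases $s \le 2$ are trivial (every node lists itself, and a newly inserted node knows its own incident edges, which are all new), so assume $s \ge 3$.

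The first step is a structural fact: under node insertions only, every copy of $K_s$ present in the current graph was ``created'' by the insertion of its last-inserted vertex and persists forever thereafter. Concretely, fix a round $t'$ and a copy $C$ of $K_s$ in $G_{t'}$. If $C \subseteq G_0$, then since $G_0$ is known to every node, each vertex of $C$ lists all triangles of $C$ it belongs to in every round (the base case of \cref{lem:node-insertion-triangle}), hence by \cref{obs:list-clique-triangles} lists $C$. Otherwise let $z \in C$ be the vertex inserted in the latest round $t \ge 1$ among vertices of $C$. Since edges appear only as part of node insertions and nothing is ever deleted: every edge of $C$ avoiding $z$ is present before round $t$, every edge of $C$ incident to $z$ appears exactly in round $t$, and no vertex or edge of $C$ is ever removed; hence $C \subseteq G_r$ for all $r \ge t$, and in particular $t \le t'$. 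Now for each pair $\{a,b\} \subseteq C \setminus \{z\}$, the set $\{z,a,b\}$ is a triangle of $G_t$, and $a$ was inserted in some round $r < t$ while $z$ was inserted in round $t$; applying \cref{lem:node-insertion-triangle} with the roles $(v,u,w) = (a,z,b)$ shows $z$ lists $\{z,a,b\}$ in round $t$, and since there are no deletions $z$ never stops listing it. So in every round $r \ge t$ node $z$ lists all triangles of $C$ containing it, and therefore, by \cref{obs:list-clique-triangles} --- carried out concretely in $z$'s execution of the loop at \cref{clique:iterate-node} --- node $z$ lists $C$. This establishes completeness: in every round, each current copy of $K_s$ is listed by at least one of its vertices.

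The second step is soundness and the bandwidth bound. Because there are no deletions, for soundness it suffices that a node only ever starts listing genuine triangles. When \cref{clique:start-listing} fires at a node, that node has just acquired a new neighbor $u$ and has received $\new$ from one of its current neighbors $v$; under node insertions the round's newly inserted vertex is unique, so $v$'s new neighbor is the same $u$, and hence all three edges of the listed triple are present. Consequently \cref{clique:list-new-clique} outputs an $s$-set $S \cup \{v\}$ only after $v$ has verified that every pair inside it spans a genuine triangle, i.e.\ only when $S \cup \{v\}$ is a genuine $K_s$. For bandwidth, the only message ever transmitted is the constant-size token $\new$ at \cref{clique:send-new} (the deletion lines never fire), and no identifier need be attached precisely because the uniqueness of the newly inserted vertex lets any recipient of $\new$ recover that vertex from its own updated adjacency list; so $O(1)$ bits suffice.

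I do not expect a real obstacle: the substance lives in \cref{lem:node-insertion-triangle}, which is already in hand, and this theorem is meant as a warm-up isolating the two observations the harder $\listc(K_s)$ result reuses. The only point deserving a line of care is the persistence claim --- that a copy of $K_s$ seen at round $t'$ is already present at round $t$, the round of its last-inserted vertex --- which is exactly where the absence of deletions is used, and which is what lets us invoke \cref{lem:node-insertion-triangle} at round $t$ rather than at the possibly much later round $t'$.
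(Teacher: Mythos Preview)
Your completeness argument has a real gap, and it traces to a typo in the statement of \cref{lem:node-insertion-triangle}. As written, that observation says the \emph{later}-inserted node $u$ lists the triangle, but its proof (and \cref{alg:list-cliques}) actually establishes that the \emph{earlier} node $v$ does the listing; the phrase ``$u$ lists'' in the statement is a slip. You anchor at the last-inserted vertex $z$ of the clique $C$ and invoke the observation to conclude that $z$ lists every triangle $\{z,a,b\}$ with $a,b\in C\setminus\{z\}$. But $z$ cannot do this. In the round $z$ is inserted, every neighbor of $z$ sends \new, yet that only tells $z$ ``I have a new neighbor''---which $z$ already knows is $z$ itself. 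These messages carry no information to $z$ about whether $a$ and $b$ are adjacent, and with $O(1)$ bits they cannot. Concretely, under the reading of \cref{clique:receive-new} that your own soundness paragraph adopts (the receiver checks whether \emph{the} newly inserted vertex was added to its list), the test fails for $z$ since $z\notin N(z)$, so \cref{clique:start-listing} never fires at $z$ and $z$ lists no triangle in round $t$.

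The fix is precisely the paper's route: anchor at the \emph{earliest}-inserted vertex of $C$ (or any vertex of $C$ already in $G_0$). That vertex $v_1$ is present whenever a later triangle $\{v_1,a,b\}\subseteq C$ is completed; at that moment the pre-existing edge between $v_1$ and the earlier of $a,b$ carries the \new\ message that lets $v_1$ confirm and list the triangle via \cref{clique:start-listing}. Accumulating over rounds, $v_1$ eventually lists every triangle of $C$ through $v_1$, and \cref{obs:list-clique-triangles} finishes. Your soundness and bandwidth paragraphs are fine as written; only the anchor in the completeness step needs to be flipped from ``last-inserted'' to ``first-inserted.''
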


\begin{proof}
    By~\cref{lem:node-insertion-triangle}, under node insertions,
    any node $v$ inserted in round $r$ lists every triangle
    containing it and a node $u$ inserted in round $r' > r$
    in $1$-round using $O(1)$-bandwidth, deterministically. Then,
    suppose $K_s$ is formed in round $j$ after the $j$-th node insertion.
    Each node of $K_s$ in the initial graph lists all subsequent
    triangles formed in $K_s$ from node insertions (or the first
    inserted node can do this if no nodes in $K_s$ were present
    in the initial graph). By~\cref{obs:list-clique-triangles},
    this node lists $K_s$ in round $j$ using $O(1)$ bandwidth.
\end{proof}

\begin{corollary}\label{cor:list}
    Given node insertions, the $i$-th node $v$ inserted for clique $K_s$ lists the
    smaller $K_{s - i + 1}$ clique of $K_s$ composed
    of nodes inserted after $v$.
\end{corollary}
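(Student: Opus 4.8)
The plan is to carry out, in miniature, the argument used for Theorem~\ref{thm:list-node-insert}: combine Observation~\ref{lem:node-insertion-triangle} with Observation~\ref{obs:list-clique-triangles}, but restricted to the sub-clique of $K_s$ spanned by $v$ and the vertices inserted no earlier than $v$. First I would fix the clique $K_s$ and enumerate its $s$ vertices as $v_1, v_2, \dots, v_s$ in nondecreasing order of insertion round (treating every vertex of the initial graph $G_0$ as inserted in round $0$, and breaking ties arbitrarily), so that $v = v_i$ and the claimed clique is $K' \eqdef \{v_i, v_{i+1}, \dots, v_s\}$. Then $K'$ is a clique, being a vertex subset of the clique $K_s$, and $|K'| = s - i + 1$, so $K'$ is a $K_{s-i+1}$; moreover, since the corollary allows only node insertions, once $K_s$ (hence $K'$) appears it is never destroyed.

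Next I would show that $v_i$ lists every triangle of $K'$ containing $v_i$. Any such triangle is $T = \{v_i, v_j, v_k\}$ with $i < j < k \le s$, so both $v_j$ and $v_k$ are inserted no earlier than $v_i$, and $v_k$ is the last of the three to appear; say it is inserted in round $r'$. Since $v_i$ was already present and adjacent to $v_j$ before round $r'$, Observation~\ref{lem:node-insertion-triangle} (together with its base case for vertices of $G_0$, in case two of $v_i, v_j, v_k$ lie in the initial graph) gives that $v_i$ lists $T$ in round $r'$ using $O(1)$ bandwidth; as no deletion ever occurs, $v_i$ never stops listing $T$, so it lists $T$ in the current round as well.

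Finally I would invoke Observation~\ref{obs:list-clique-triangles}: since $v_i$ lists every triangle containing it within the clique $K'$, it lists $K'$ itself, and $K'$ is exactly the $K_{s-i+1}$ formed by $v_i$ and the vertices of $K_s$ inserted after $v_i$. The last ``assemble the clique'' step is purely local computation at $v_i$ (the final loop of \cref{alg:list-cliques}), so it uses no bandwidth beyond the $O(1)$ already charged.

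I do not expect a genuine obstacle: the corollary is essentially a sharper bookkeeping of the proof of Theorem~\ref{thm:list-node-insert}. The one point that needs a little care is the tie-breaking in the insertion ordering — most notably when several vertices of $K_s$ already lie in $G_0$ — because ``inserted after $v$'' must be read against the single fixed ordering $v_1, \dots, v_s$. This is harmless, since Observation~\ref{lem:node-insertion-triangle} already covers initial-graph vertices in its base step, so a tie between $v_i$ and a later-indexed vertex of $G_0$ still lets $v_i$ list the triangle $T$ needed above.
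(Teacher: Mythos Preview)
Your proposal is correct and follows essentially the same approach as the paper: invoke Observation~\ref{lem:node-insertion-triangle} to conclude that $v$ lists every triangle containing $v$ and two later-inserted vertices of $K_s$, then apply Observation~\ref{obs:list-clique-triangles} to assemble the $K_{s-i+1}$. The paper's proof is a two-sentence version of exactly this, without the explicit ordering notation or tie-breaking discussion you include.
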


\begin{proof}
    By~\cref{lem:node-insertion-triangle}, each node $v$ lists
    all triangles containing $v$ and incident to nodes inserted after it. Hence, each
    node lists all incident cliques with nodes inserted after it.
\end{proof}

\begin{observation}\label{lem:min-can-list}
    If a node $v$ lists a clique of size $k \geq 3$, then
    $v$ lists all triangles containing $v$ in the clique.
\end{observation}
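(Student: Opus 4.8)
The plan is to read the statement directly off the structure of~\cref{alg:list-cliques}. First I would note that the only cliques any node ever lists in the algorithm have size exactly $3$ (triangles, started at~\cref{clique:start-listing} or inherited from the initial graph) or size exactly $s$ (listed at~\cref{clique:list-new-clique}); no clique of intermediate size is ever explicitly listed. Hence it suffices to treat the cases $k=3$ and $k=s$.

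The case $k=3$ is immediate: a $K_3$ containing $v$ has exactly one triangle through $v$ — namely that $K_3$ itself — which $v$ lists by hypothesis. For $k=s$, the key observation is that $v$ lists a set $S\cup\{v\}$ as a $K_s$ only via~\cref{clique:list-new-clique}, and that line is guarded by~\cref{clique:all-subsets-triangle}, which fires precisely when $v$ currently lists $\{v\}\cup A$ as a triangle for \emph{every} two-element subset $A=\{a,b\}\subseteq S$. Since the triangles through $v$ inside the clique $S\cup\{v\}$ are exactly the sets $\{v,a,b\}$ with $\{a,b\}\subseteq S$, this is exactly the assertion of the observation at the round in which $v$ starts listing the clique.

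The one point that needs care is that this property is preserved across subsequent rounds: I must rule out $v$ being left listing a $K_s$ after one of its constituent triangles has been destroyed. This is handled by the final ``$v$ stops listing every clique containing a destroyed triangle'' step of~\cref{alg:list-cliques}, together with~\cref{lem:node-insertion-triangle}, which guarantees that $v$ actually detects — and stops listing — any triangle through it that an edge or node deletion destroys. Combining these, at every point in time the set of $K_s$'s listed by $v$ is contained in the set of cliques all of whose $v$-triangles $v$ is currently listing, which is precisely what the observation claims.

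I expect this last bookkeeping step — verifying that the clique-stopping rule stays in lock-step with the triangle-stopping rules of the algorithm, so that the invariant ``$v$ lists a clique $\implies$ $v$ lists all of its $v$-triangles'' is never transiently violated — to be the only mildly delicate part; the rest is a direct inspection of the pseudocode and an appeal to~\cref{lem:node-insertion-triangle}.
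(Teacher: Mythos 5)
Your argument is correct, but it takes a genuinely different route from the paper. The paper's proof of \cref{lem:min-can-list} is a one-line, algorithm-independent observation about information: to list a clique a node must list the IDs of all of its nodes, and since it knows these nodes form a clique, it can enumerate every $3$-subset containing itself and list each such subset as a triangle; no reference to \cref{alg:list-cliques} is made at all. You instead read the property off the structure of \cref{alg:list-cliques}: a $K_s$ is only ever emitted under the guard of \cref{clique:all-subsets-triangle}, which requires exactly the triangles in question to be currently listed, and you then argue the invariant persists over later rounds via the clique-stopping rule together with \cref{lem:node-insertion-triangle}. What your version buys is a slightly stronger, operational statement --- $v$ is \emph{actively maintaining} those triangle listings whenever it lists the clique, which is really the form in which the fact gets used in the proof of \cref{thm:list-clique} --- at the cost of being tied to one particular algorithm: the same fact is also needed (implicitly) for the batched algorithm \cref{alg:batched-clique} in the proof of \cref{lem:batched-clique}, where your argument would have to be re-run (it does transfer, since the same guard appears there), whereas the paper's definitional argument applies verbatim to any node that lists a clique. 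Two small bookkeeping points you should state explicitly: your reduction to $k\in\{3,s\}$ silently uses that the algorithm never lists intermediate-size cliques, and cliques known from the initial graph $G_0$ are listed from full topology knowledge rather than via \cref{clique:list-new-clique}; both are immediate (in $G_0$ every node knows all triangles through it, so the invariant holds there too), but they are part of the case analysis your approach commits you to.
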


\begin{proof}
This observation can be easily shown:
if a node lists a clique, then it lists all nodes
contained in the clique by definition. Then,
because the node knows that such a clique exists, it can
find all combinations of $3$ nodes in the clique and list
such combinations as all triangles in the clique.
\end{proof}

\begin{observation}\label{lem:triangle-deletion}
    Any node/edge deletion in a clique results in (at least)
    one adjacent triangle deletion (where the triangle is 
    composed of nodes in the clique)
    for every node in the clique.
\end{observation}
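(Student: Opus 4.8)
The plan is a short case analysis on the type of deletion, assuming the clique $K$ has size $s \geq 3$ so that it actually contains triangles (consistent with the rest of this section, and with the fact that $\listc(K_s)$ is only of interest for $s \geq 3$). A preliminary remark: every triangle $\{x,y,z\}$ with all three vertices in $K$ exists in the graph immediately before the update, since $K$ is a clique; hence to certify that such a triangle is \emph{destroyed} it suffices to exhibit that it no longer exists after the update. First suppose the update deletes an edge $\{a,b\}$ of $K$ (both endpoints in $K$), and fix an arbitrary node $v \in K$. If $v \notin \{a,b\}$, the triangle $\{a,b,v\}$ lies inside $K$, is incident to $v$, and is destroyed because $\{a,b\}$ is gone. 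If $v \in \{a,b\}$, then since $s \geq 3$ there is some third node $c \in K \setminus \{a,b\}$, and the triangle $\{a,b,c\}$ lies inside $K$, is incident to $v$, and is destroyed. Either way $v$ has a destroyed adjacent triangle composed of clique nodes.

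Now suppose instead the update deletes a node $u$ of $K$ together with its incident edges; here ``every node in the clique'' should be read as every node of $K$ still present after the update, i.e.\ every $v \in K \setminus \{u\}$. Because $s \geq 3$, the set $K \setminus \{u\}$ has at least two elements, so for any such $v$ we may pick $w \in K \setminus \{u,v\}$. The triangle $\{u,v,w\}$ lies inside $K$, is incident to $v$, and is destroyed since $u$ no longer exists. This handles both cases, and in each case the witnessing triangle uses only nodes of $K$, as required.

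I do not anticipate a genuine obstacle: the claim is essentially bookkeeping, and the only point needing care is the appeal to $s \geq 3$, which is exactly what guarantees the existence of the ``third vertex'' ($c$, resp.\ $w$) used to build the witness triangle. It is worth noting in passing why this is the observation we want for the main theorem: together with \cref{lem:min-can-list} (any node that lists a clique lists every triangle through it in the clique) and the triangle-deletion handling already established in \cref{lem:node-insertion-triangle}, it implies that whenever a clique is destroyed by a deletion, every former clique node sees one of its adjacent clique-triangles destroyed and therefore stops listing the clique.
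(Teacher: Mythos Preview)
Your proof is correct and follows the same idea as the paper's: pick a witness triangle through the given node that uses the deleted edge (or deleted node). In fact your argument is more complete than the paper's own proof, which only spells out the edge-deletion case for vertices $w \neq u,v$ and leaves the cases $v \in \{a,b\}$ and the node-deletion case implicit; your explicit treatment of these, together with the appeal to $s \geq 3$, fills those small gaps.
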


\begin{proof}
    A clique consists of a set of triangles which contains all subsets
    of $3$ vertices from the clique. Suppose the edge deletion
    occurs between vertices $u$ and $v$, then any other vertex
    $w \neq u, v \in C$, where $C$ is the clique, must be adjacent to $u$ and $v$,
    and hence formed
    a triangle with $u$ and $v$. Since the edge between $u$ and $v$
    is destroyed, $w$ cannot form a triangle with $u$ and $v$, and
    hence, a triangle adjacent to $w$ is destroyed.
\end{proof}

Given the above observations and the relevant lemmas and theorems from~\cite{BC19},
we are now ready to prove the main theorem
of this section which resolves Open Question $4$ of~\cite{BC19}.

\insertsthm*

\begin{proof}
    To begin, we stipulate that no node lists a clique if it did not list
    the clique previously after receiving a deletion message.
    \cref{lem:node-insertion-triangle}
    shows that every node $v$ which is inserted in round $r$ lists every triangle
    containing it and a node $u$ inserted in round $r' > r$. Using this
    observation, we handle the updates in the following
    way (reiterating the strategy employed by our key observations).
    First, we show that any created clique is listed by at least one of 
    the nodes it contains.
    By~\cref{lem:node-insertion-triangle} and~\cref{obs:list-clique-triangles}, 
    the node that is inserted first in a $K_s$ clique
    lists the $K_s$ clique that is created by node insertions.

    We now show the crux of our proof
    that each node which lists $K_s$ also successfully
    stops listing it after it is destroyed.
    As proven previously in~\cref{lem:node-insertion-triangle}, 
    any node which lists a triangle stops listing it when it is destroyed. 
    Given any edge deletion or node deletion
    which destroys a clique $C$, it must destroy at least one triangle
    adjacent to every node in the clique $C$ by~\cref{lem:triangle-deletion}.
    Every node which lists the clique lists all triangles containing it
    in the clique by~\cref{lem:min-can-list}.
    Then, every node $v$ which lists a clique will know if one of
    its listed triangles is destroyed; then $v$
    will also know of the clique that is destroyed if a triangle
    that makes up the clique is destroyed.

    Finally, to conclude our proof, we explain one additional scenario
    that may occur: a clique which is destroyed may be added back again later
    on. Any clique which is destroyed due to an edge deletion
    cannot be added again unless a node deletion followed
    by another node insertion occurs since
    we are only allowed edge insertions associated with node
    insertions. Thus, by~\cref{cor:list}, the earliest
    node(s) (if such nodes existed in the initial graph)
    in this newly formed clique lists this clique.
    
    Hence, we have proven that any created clique is listed
    by at least one of its nodes and
    that any node which lists the clique also stops listing it after it is destroyed,
    concluding our proof of our theorem. The 
    bandwidth is $O(1)$ since each 
    node sends either no message, \new, 
    or \del to each of its neighbors each round.
\end{proof}

\section{Wedge Listing}\label{sec:wedges}

An \emph{induced wedge} $(u, v, w)$ in the input graph $G = (V, E)$ 
is a path of length $2$ where edges $(u, v)$ and $(v, w)$ exist 
and no edge exists between $u$ and $w$ 
in the induced subgraph consisting of nodes $\{u, v, w\} \subseteq V$. 
We denote a wedge by $\Gamma$.
In this section, we provide an algorithm
listing \emph{induced} wedges. 
Listing non-induced wedges is a much simpler
problem since each node can simply list pairs of its
adjacent neighbors without knowing whether an edge exists between each pair. 
For simplicity, from here onward, we say ``wedge" to mean induced wedge.
\cref{lem:wedge-listing} is the main theorem in this section for
listing wedges. We give the pseudocode for the algorithm
used in the proof of this theorem in~\cref{alg:list-wedges}.
Our algorithm uses listing triangles as a subroutine since
an induced wedge can be formed from an edge deletion to a triangle.
Listing wedges may be useful as a subroutine in
future work that proves additional bounds for listing
other small subgraphs using small bandwidth. 

\SetKwFunction{FnListWedges}{ListWedges}

\begin{algorithm}[!t]
    \caption{\label{alg:list-wedges} Listing Wedges}
    \textbf{Input:} Update $U$ which can be a node deletion, edge insertion, or edge deletion.\\ 
    \textbf{Output:} Each induced wedge is listed by at least one of its nodes and no set of 
    three nodes which is not a wedge is wrongly listed as a wedge.\\
    \Fn{\FnListWedges{$U$}}{
        \If{$U$ adds $u$ to $N(v)$}{
            $v$ sends $(\new, ID_u)$ to all $w \in N(v)$.\label{wedge:edge-ins}\\
            \For{every wedge $(v, x, u)$ that $v$ lists}{
                $v$ stops listing wedge $(v, x, u)$. \label{wedge:stop-list-adj}\\
                $v$ starts listing triangle $\{v, x, u\}$. \label{wedge:ins-triangle-list}
            }
        } \ElseIf{$U$ deletes $u$ from $N(v)$}{
            $v$ sends $(\del, ID_u)$ to all $w \in N(v)$.\label{wedge:edge-del}\\
            \For{every triangle $\{v, x, u\}$ that $v$ lists}{
                $v$ starts listing new wedge $(v, x, u)$.\label{wedge:del-list-new-wedge}\\
                $v$ stops listing triangle $\{v, x, u\}$.\label{wedge:stop-del-list-triangle}
            }

            \For{every wedge $(x, v, u)$ and/or
            $(u, v, x)$ that $v$ lists}{\label{wedge:del-destroyed-wedge}
                $v$ stops listing $(u, v, x)$ and/or 
                $(x, v, u)$.\label{wedge:del-stop-list}\\
            }
        } 
        \If{node $x$ receives $(\new, ID_u)$ from $v$}{\label{wedge:receive-ins}
            \If{$x$ is not adjacent to $u$}{\label{wedge:not-adj}
                $x$ starts listing new wedge $(x, v, u)$.\label{wedge:list-new}\\
            }
            \If{$x$ lists wedge $(v, x, u)$}{\label{wedge:lists-wedge-ins}
                $x$ stops listing wedge $(v, x, u)$.\label{wedge:stops-list-wedge-ins}\\
                $x$ starts listing triangle $\{v, x, u\}$.\label{wedge:starts-list-triangle-ins}\\
            }%
        }
        \If{$x$ receives $(\del, ID_u)$ from $v$}{\label{wedge:receive-del}
            \If{$x$ lists wedge $(x, v, u)$}{\label{wedge:listed-wedge}
                $x$ stops listing wedge $(x, v, u)$.\label{wedge:stop-list}\\
            }
            \If{$x$ lists triangle $\{x, v, u\}$}{\label{wedge:list-triang}
                $x$ starts listing new wedge $(v, x, u)$.\label{wedge:del-new}\\
                $x$ stops listing triangle $\{v, x, u\}$.\label{wedge:del-new-stop-triangle}
            }
        }

    }
\end{algorithm}

We describe our algorithm below. All messages are sent \emph{in the same round}; multiple
messages sent between $u$ and $v$ are concatenated with each other and sent as one message.
On an update $U$, in~\cref{alg:list-wedges}, each node sends $O(1)$ bits, 
indicating whether a neighbor
is deleted or added to their neighbors.
Let us denote these bits by $\del$ and $\new$, respectively.
Furthermore, they send the ID of the neighbor that is deleted or inserted.
Let this ID be $\id_u$ for node $u$.
This procedure is shown in~\cref{wedge:edge-ins,wedge:edge-del} 
in~\cref{alg:list-wedges} and all of the message sending is done in $1$ round.
An edge insertion can form a triangle; thus, in the same round as the $(\new, ID_u)$ message,
node $v$ also starts listing the new triangle 
$\{v, x, u\}$ (\cref{wedge:ins-triangle-list,wedge:stop-list-adj}) 
and stops listing the wedge.
The node/edge deletion can destroy a wedge 
(\cref{wedge:del-destroyed-wedge});
in which case, the node stops listing the 
wedge (\cref{wedge:del-stop-list}).
Similarly, for an edge/node deletion that destroys a triangle, $v$ starts listing wedge %
$(v, x, u)$ for every triangle $\{v, x, u\}$ that $v$ lists (\cref{wedge:del-list-new-wedge,wedge:stop-del-list-triangle}).

Any node $x$ that receives $(\new, ID_u)$ from a neighbor $v$ (\cref{wedge:receive-ins})
lists $(x, v, u)$ as a new wedge (\cref{wedge:list-new}) if it is not adjacent to $u$ (\cref{wedge:not-adj}).
If $x$ lists wedge $(v, x, u)$ (\cref{wedge:lists-wedge-ins}), then $x$ stops listing the wedge (\cref{wedge:stops-list-wedge-ins})
and starts listing triangle $\{v, x, u\}$ (\cref{wedge:starts-list-triangle-ins}).

Any node $x$ that receives $(\del, ID_u)$ from a neighbor $v$ (\cref{wedge:receive-del})
first checks if a wedge it lists is destroyed. If $x$ lists wedge $(x, v, u)$, then 
it stops listing wedge $(x, v, u)$ since it is destroyed (\cref{wedge:stop-list}).
Then, it checks if the deletion destroys a triangle it lists. If it lists triangle $\{x, v, u\}$,
then it lists a new wedge $(v, x, u)$ since a destroyed triangle creates a new wedge (\cref{wedge:del-new,wedge:del-new-stop-triangle})
and stops listing the triangle.

\wedgethm*

\begin{proof}
    We prove via induction that this algorithm achieves the following guarantees
    \begin{enumerate}[(a)]
        \item any wedge is listed by at least one node
        \item and no sets of 
    three nodes that is not a wedge is incorrectly listed as a wedge
    \end{enumerate} 
    after any edge insertion
    or node/edge deletion.
    In the base case, the graph is either the input graph and every
    wedge in the graph is listed by at least one node or the graph is empty.
    We assume that at step $j$, each wedge in the graph is
    listed by some node in the wedge. We show that all wedges formed
    in round $j+1$ by some update is listed by at least one node in the wedge; 
    then, we show that
    each wedge that is destroyed in round $j + 1$ is no longer listed by any node.
    We prove this by casework over the type of update:

    \begin{itemize}
        \item Edge insertion: Given an edge insertion $\{u, v\}$, node $v$ sees some
            node $u$ added to its adjacency list and sends $(\new, \id_u)$ to all
            its neighbors in $N(v)$ (similarly for $u$). Suppose
            $x \in N(v)$ is a neighbor of $v$ but not of $u$. Then, $x$ can distinguish
            between triangle $\{x, v, u\}$ and wedge $(x, v, u)$ by seeing whether $u$
            is in its adjacency list (i.e. whether edge $\{x, u\}$ exists). Hence,
            $x$ lists wedge $(x, v, u)$. %

            An edge insertion can cause a wedge to be destroyed if it forms a triangle from a wedge. 
            Suppose $(x, v, u)$ is a wedge that becomes a triangle due to edge insertion $\{x, u\}$.
            If $x$ lists the wedge, then $x$ knows $\{x, v, u\}$ is now a triangle because it see
            $u$ added to its adjacency list. The same argument holds for $u$. If $v$ lists the 
            wedge, then $v$ receives $(\new, ID_x)$ and $(\new, ID_u)$ and knows that $\{x, v, u\}$
            is now a triangle and stops listing it as a wedge.
        \item Edge deletion: Suppose an edge deletion $\{u, v\}$ destroys wedge $(x, v, u)$.
            Since $u$ and $v$ can see that the other is no longer a neighbor, if
            they could list wedge $(x, v, u)$, they now know that $(x, v, u)$ has been destroyed and stops
            listing it. Node $v$ sends to $x$ the tuple $(\del, \id_u)$ using $O(\log n)$ bandwidth and
            $1$ round of communication. Hence, after this one round of communication,
            $x$, if it can list $(x, v, u)$, now knows $(x, v, u)$ has been destroyed and stops listing it.

            We now show that if an edge deletion destroys a triangle, then at least one of its nodes
            lists the new wedge.
            Under our current set of update operations, any triangle
            is formed after $3$ edge insertions. After the first two
            edge insertions, at least one node $x$ lists the wedge that is formed
            by our argument provided above for edge insertions. Then, after the third insertion,
            $x$ lists the triangle formed (and stops listing the wedge).
            Suppose without loss of generality that node $x$ lists triangle $\{x, v, u\}$.
            If $\{x, u\}$ is deleted, $x$ sees $u$ removed from its adjacency list
            and lists $(x, v, u)$ as a new wedge. The same argument holds
            for edge deletion $(x, v)$. If instead $(u, v)$ is deleted, 
            then node $x$ receives $(\del, ID_v)$ and $(\del, ID_u)$
            and lists $(v, x, u)$ as a new wedge. %

        \item Node deletion: Suppose wlog $u$ is deleted and there exists some
            wedge containing $u$, $v$, and $w$. Then, if $u$ is deleted and
            the wedge consists of edges
            $\{v, u\}$ and $\{u, w\}$, then both $v$ and $w$ notice that $u$
            has been removed from their adjacency lists. Hence, if either of the nodes
            listed the wedge, they would know it is destroyed and stops listing it.
            If the wedge consists
            of edges $\{u, v\}$ (resp. $\{u, w\}$) and $\{v, w\}$, then $w$ would know
        of the deletion of edge $\{u, v\}$ (resp. $v$ of edge $\{u, w\}$) after
        receiving messages since $v$ (resp. $w$) sent to all its neighbors $(\del, \id_u)$.
    \end{itemize}

    Hence, in step $j+1$ after the $j+1$-st synchronous round, all wedges formed in step $j+1$
    is listed by at least one vertex. The
    bandwidth is $O(\log n)$ since the 
    IDs of vertices are $O(\log n)$ sized.
\end{proof}

\section{Batched Triangle Listing}\label{sec:batched-triangle}

In this section, we extend our algorithms in the previous sections to handle
batches of more than one update. Our results are in the batched model (formally defined in~\cref{sec:prelims}) where in
each batch of updates, $O(1)$ updates occur \emph{adjacent to any node in the graph}.
Such a model is realistic since often many
updates can occur in total over an entire
real-world network (such a social network) but the
updates adjacent to each node in the network are often few in number.

We show that using our techniques above, we can also perform $\listc(K_3)$ in
$1$-round using $O(\log n)$ bandwidth. When provided a batch of updates,
each node with one or more new neighbors sends the ID of the new neighbor(s) to all its
neighbors. 
Furthermore, each node sends the IDs of all nodes deleted from its adjacency list to 
all its neighbors. We show that this simple algorithm allows us to perform $\listc(K_3)$
in one round under any type of update. The pseudocode for this 
algorithm is given in~\cref{alg:batched-triangle}. All messages are sent in the same round
in~\cref{alg:batched-triangle} and multiple messages sent (in the pseudocode)
between two nodes are concatenated into the same message.

\SetKwFunction{FnBListTriangle}{BatchedListTriangles}
\begin{algorithm}[!t]
    \caption{\label{alg:batched-triangle} Batched Listing Triangles}
    \textbf{Input:} Updates $\mathcal{U}$ which is a set of updates consisting of
    node insertions, node deletions, edge insertions, and/or edge deletions.\\ 
    \textbf{Output:} Each triangle is listed by at least one of its nodes and no set of 
    $3$ nodes which is not a triangle is wrongly listed.\\
    \Fn{\FnBListTriangle{$\mathcal{U}$}}{
        \If{$\mathcal{U}$ adds node(s) $u \in I_v$ to $N(v)$}{\label{batched-triangle:add-adj}
            $v$ sends $(\new, \{ID_u \mid u \in I_v\})$ to all $w \in N(v)$.\label{batched-triangle:send-new}\\
        } 
        \If{$w$ receives $(\new, S_v)$ from $v$}{\label{batched-triangle:receive-new}
            \For{$ID_u \in S_v$}{
                \If{$u \in N(w)$}{
                $w$ starts listing new triangle $\{u, v, w\}$.\label{batched-triangle:start-listing}\\
                }
            }
        }
        \If{$\mathcal{U}$ deletes node(s) $u \in D_v$ from $N(v)$}{\label{batched-triangle:delete-adj}
            $v$ sends $(\del, \{ID_u \mid u \in D_v\})$ to all $w \in N(v)$.\label{batched-triangle:send-del}\\
            \For{$u \in D_v$}{
                \For{$w \in N(v) \cup D_v$}{
                    \If{$v$ lists triangle $\{u, v, w\}$}{\label{batched-triangle:list-triangle}
                        $v$ stops listing triangle $\{u, v, w\}$.\label{batched-triangle:stop-listing}
                    }
                }
            }
        }
        \If{$w$ receives $(\del, S_v)$ from $v$}{\label{batched-triangle:receive-2-del}
            \For{$ID_u \in S_v$}{
                \For{all triangles $\{u, v, w\}$ that $w$ lists}{\label{batched-triangle:list-new-2}
                    $w$ stops listing triangle $\{u, v, w\}$.\label{batched-triangle:stop-list-2}
                }
            }
        }
    }
\end{algorithm}

\triangleslist*

\begin{proof}
    For any inserted edge $\{u, v\}$, node $u$ sends $ID_v$ to 
    all its neighbors (and similarly for node $v$). Any neighbor $w$ of 
    both $u$ and $v$ (and where neither $u$ or $v$ are deleted from its adjacency list)
    lists the new triangle $\{u, v, w\}$. Edges $\{u, w\}$ and/or $\{v, w\}$ may also be 
    new edges and the proof still holds.    
    
    Now, we consider node insertions. An inserted node, $u$, creates a
    triangle if it is adjacent to two newly inserted edges, $\{u, v\}, \{u, w\}$, and the edge
    $\{v, w\}$ exists in the graph or is newly inserted. If $\{v, w\}$ already exists, then
    $v$ sends $\id_u$ to $w$ and $w$ sends $\id_u$ to $v$; thus, both $v$ and $w$ now list
    triangle $\{u, v, w\}$. If $\{v, w\}$ is newly inserted, then $v$ and $w$ would still send 
    $\id_u$ to each other and also send each other's IDs to $u$. Thus, all three nodes now list the triangle. 
    
    Finally, we consider node/edge deletions. For any destroyed triangle $\{u, v, w\}$, suppose
    wlog that $u$ lists it. Either $u$ is adjacent to an edge deletion, in which case, $u$ 
    stops listing $\{u, v, w\}$; or, $u$ is still adjacent to both $v$ and $w$ and edge $\{v, w\}$
    is deleted. In the second case, $v$ and $w$ both send $u$ each other's ID and node $u$ stops 
    listing $\{u, v, w\}$. A node deletion is incident to all remaining nodes of the triangle and 
    so the remaining nodes stop listing the triangle.

    Since each node is adjacent to $O(1)$ updates and sends a message of size equal to the number of
    adjacent updates times the size of each node's ID (where the size of each ID is $O(\log n)$), 
    the total bandwidth used is $O(\log n)$.
\end{proof}

\section{Batched Clique Listing}\label{sec:batched-clique}

A slightly modified algorithm allows us to perform $\listc(K_s)$ under
the same constraints; however, the proof is more
involved than the $K_3$ case
but uses the same concepts we developed in the previous sections.
The modification computes the cliques using the triangles listed after a batch of 
updates similar to~\cref{alg:list-cliques}.

Unlike the case for triangles, here, as in the case for counting cliques under single updates,
\cref{thm:list-clique}, we cannot handle edge insertions in $O(\log n)$ bandwidth. We give the 
new pseudocode for this procedure in~\cref{alg:batched-clique}. This part of the procedure
is identical to that given in~\cref{alg:list-cliques}.

\SetKwFunction{FnBListClique}{BatchedListCliques}
\begin{algorithm}[!t]
    \caption{\label{alg:batched-clique} Batched Listing Cliques}
    \textbf{Input:} Updates $\mathcal{U}$ which is a set of updates consisting of
    node insertions, node deletions, and/or edge deletions.\\ 
    \textbf{Output:} Each $K_s$ is listed by at least one of its nodes and no set of 
    $s$ nodes which is not a $K_s$ is wrongly listed.\\
    \Fn{\FnBListClique{$\mathcal{U}$}}{
        Run \FnBListTriangle{$\mathcal{U}$} (\cref{alg:batched-triangle}).\\
        \For{all $v \in V$}{\label{batched-clique:iterate-node}
            \For{every $S \subseteq N(v)$ of $s-1$ neighbors}{\label{batched-clique:every-subset}
                \If{$v$ lists $\{v\} \cup A$ as a triangle for every subset of $2$ nodes $A = \{a, b\}\subseteq S$}{\label{batched-clique:all-subsets-triangle}
                    $v$ lists $S \cup \{v\}$ as a $K_s$.\label{batched-clique:list-new-clique}
                }
            }
            $v$ stops listing every clique containing a destroyed triangle.\\
        }
    }
\end{algorithm}

To prove the main theorem in this section,~\cref{lem:batched-clique}, we only need to prove a slightly
different version of \cref{obs:list-clique-triangles} that still holds in this setting. Namely, 
we modify~\cref{obs:list-clique-triangles} so that it holds for triangles created by two or more
updates in the same round.

\begin{observation}\label{lem:batched-obs-1}
    Under node insertions and node/edge deletions,
    suppose $v$ is inserted in round $r$ and $u$ is inserted in round
    $r' \geq r$. If $\{u, v, w\}$ is a triangle in round $r'$, then in 
    round $r'$, $u$ lists $\{u, v, w\}$ using $O(1)$ bandwidth. 
\end{observation}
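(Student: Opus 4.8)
The plan is to mirror the inductive proof of \cref{lem:node-insertion-triangle}, inducting on the batch index $r'$, but to isolate the one genuinely new phenomenon that the batched model introduces: the edges completing $\{u,v,w\}$ may all appear in the \emph{same} round $r'$. I would take as the inductive hypothesis that through round $r'-1$ every inserted node lists each triangle completed while it is present, with the base case being the fully-known initial graph $G_0$. The only algorithmic ingredient is \cref{alg:batched-triangle}, invoked as the first line of \cref{alg:batched-clique}: a node broadcasts the set of IDs of its newly-arrived neighbors on \cref{batched-triangle:send-new}, and a receiver begins listing a triangle on \cref{batched-triangle:start-listing} whenever an ID it receives already lies in its own neighbor list. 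The whole proof is then a case analysis on whether $r'=r$ or $r'>r$.

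The crux is the case $r'=r$, where $u$ and $v$ are inserted in the \emph{same} batch. Because $u$ is newly inserted, every edge incident to $u$ is new, so both $\{u,v\}$ and $\{u,w\}$ appear in round $r'$; because $v$ is also newly inserted, every edge incident to $v$ is new, so $\{v,w\}$ appears in round $r'$ as well. Hence $w$ is a \emph{new} neighbor of $v$, and on \cref{batched-triangle:send-new} node $v$ includes $ID_w$ in the set it broadcasts to all its neighbors, in particular to $u$ (a neighbor of $v$ since $\{u,v\}$ is an edge). When $u$ processes this message on \cref{batched-triangle:receive-new}, it finds $ID_w$ in the received set and finds $w$ in its own adjacency list (as $\{u,w\}$ is new), so it lists $\{u,v,w\}$ on \cref{batched-triangle:start-listing}. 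Since $v$ is incident to only $O(1)$ updates, its broadcast set has constant size, so the message is $O(\log n)$ bits, matching the bandwidth of \cref{lem:batched-clique}.

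For the remaining case $r'>r$, the vertex $v$ is already present at the start of round $r'$ and the triangle is completed solely by the insertion of $u$; this is exactly the situation governed by \cref{lem:node-insertion-triangle}, which I would invoke to conclude that $u$ lists $\{u,v,w\}$. The only point requiring care in porting that argument to the batched algorithm is the message format: \cref{alg:list-cliques} sends a bare \new flag, whereas \cref{alg:batched-triangle} sends a \emph{set} of IDs. I would verify that this substitution is harmless—each incident insertion contributes exactly the ID of the corresponding new endpoint to the broadcast set, and because each node is incident to $O(1)$ updates the set stays constant-sized, so every ID needed to identify the third vertex is present and the test on \cref{batched-triangle:start-listing} fires exactly as in the single-update analysis.

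I expect the main obstacle to be precisely the same-round case: when two or more of the triangle's edges are born in the single batch $r'$, one must be sure the set-valued $\new$ messages let $u$ \emph{unambiguously} pair the incoming $ID_w$ with its already-known neighbor $w$ rather than with some unrelated new neighbor, and that this pairing does not depend on the order in which updates within the batch are processed. The argument above shows the pairing is forced by membership tests alone ($ID_w$ received from $v$, together with $w\in N(u)$), so no intra-batch ordering assumption is needed; carefully confirming this, and confirming that a node deleted-then-reinserted within the same batch is treated as brand new (exactly as \cref{alg:list-cliques} stipulates), is where the real verification lies.
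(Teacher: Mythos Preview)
Your approach is essentially the same as the paper's: a case split on $r'>r$ versus $r'=r$, with the former reducing to \cref{lem:node-insertion-triangle} and the latter handled by tracing a single broadcast message in \cref{alg:batched-triangle}. The paper's proof is even terser---it observes that in \emph{both} cases $w$ sends $ID_u$ to $v$ (since $u$ is a new neighbor of $w$ in round $r'$) and $v$ finds $u$ in its adjacency list, so $v$ lists the triangle; no induction is spelled out. You instead trace the symmetric message in the $r'=r$ case ($v$ sends $ID_w$ to $u$, so $u$ lists), which is equally valid and in fact matches the literal wording of the statement more closely than the paper's own proof.

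One omission worth flagging: the paper's proof of this observation also establishes the \emph{deletion} side---that any node listing $\{u,v,w\}$ stops listing it when the triangle is destroyed by one or more edge/node deletions in a batch---by a short case analysis on how many of the three edges are removed. This is not part of the observation's statement, but it is invoked downstream in the proof of \cref{lem:batched-clique}, so you would need to supply it somewhere.
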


\begin{proof}
    The key difference between this observation and~\cref{obs:list-clique-triangles} is that the observation
    holds when $u$ is inserted in the same round $r$ as $v$. Suppose triangle $\{u, v, w\}$ is created in round $r'$.
    We first consider the case when $r' > r$. In this case, $w$ sends $ID_u$ to $v$ and $v$ sees $u$ in its adjacency list.
    Thus, $v$ lists triangle $\{u, v, w\}$. Now, suppose $r' = r$. In this case, $w$ still sends $ID_u$ to $v$ and $v$ sees
    $u$ in its adjacency list; thus, $v$ lists $\{u, v, w\}$. 

    There are multiple scenarios for edge deletions that destroy triangle $\{u, v, w\}$. First, if one edge, $\{u, v\}$,
    is deleted, then $u$ and $v$ send each other's IDs to $w$ in deletion messages. Thus, $w$ stops listing $\{u, v, w\}$ 
    if it previously listed $\{u, v, w\}$. If two edges are deleted, then, every node is adjacent to at least one deletion
    and will stop listing the triangle. A node deletion causes at least one adjacent edge to be deleted from every remaining
    node in the triangle; thus, any remaining node(s) will stop listing the triangle.
\end{proof}

\begin{observation}\label{lem:batched-obs-2}
    Under node insertions and node/edge deletions,
    if triangle $\{u, v, w\}$ is created in round $r$ from two or more node insertions, then all 
    three nodes in the triangle lists it.
\end{observation}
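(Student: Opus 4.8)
The plan is a short, essentially symmetric argument driven by one structural fact of the batched model. Since the triangle $\{u,v,w\}$ is \emph{created} in round $r$ ``from two or more node insertions,'' at least two of $u,v,w$ lie in $V_r \setminus V_{r-1}$; the third vertex may or may not also be newly inserted. The model guarantees that every edge incident to a newly inserted node is itself a newly inserted edge of round $r$, so if two of the three vertices are new then \emph{all three} triangle edges $\{u,v\}$, $\{v,w\}$, $\{u,w\}$ are inserted in round $r$. This is the only place where the ``two or more'' hypothesis enters, and I expect it to be the only point requiring any care.

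Granting that, I would fix an arbitrary vertex $t \in \{u,v,w\}$ and let $t_1, t_2$ denote the other two. Because at most one of $u,v,w$ belongs to $V_{r-1}$, at least one of $t_1, t_2$ — say $t_1$ — is newly inserted in round $r$, hence the edge $\{t_1, t_2\}$ is new. Thus $t_1$ sees $t_2$ appear in its adjacency list and, by~\cref{batched-triangle:send-new} of~\cref{alg:batched-triangle}, sends $(\new, S_{t_1})$ with $ID_{t_2} \in S_{t_1}$ to \emph{every} current neighbor; in particular $t$ receives this message, since $\{t, t_1\}$ is an edge of the triangle and so $t \in N(t_1)$. On the receiving side, $t_2 \in N(t)$ (again a triangle edge), so the guard on~\cref{batched-triangle:start-listing} fires for the pair $(t_1,t_2)$ and $t$ starts listing $\{t, t_1, t_2\} = \{u,v,w\}$. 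As $t$ was arbitrary, all three of $u,v,w$ list the triangle.

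It then remains to check that none of these nodes \emph{un-lists} $\{u,v,w\}$ during round $r$. By~\cref{batched-triangle:stop-listing,batched-triangle:stop-list-2}, a node drops $\{u,v,w\}$ only upon observing the deletion of one of its three edges — either directly from its own adjacency list or via a $(\del, \cdot)$ message from a neighbor. But we established that all three of those edges are \emph{inserted} in round $r$, and a single batch does not contain both the insertion and the deletion of the same edge, so no such deletion is seen. Hence the listings made in the previous paragraph persist to the end of round $r$, which is exactly the assertion.

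I do not anticipate a real obstacle: the whole argument is one application of the ``all incident edges of a newly inserted node are new'' property, followed by reading off the relevant lines of~\cref{alg:batched-triangle}. The one piece of bookkeeping is to be sure that the vertex allowed to be ``old'' is also covered — that is, to verify that for that vertex, too, one of its two triangle-neighbors is new, so that it still receives a $(\new,\cdot)$ message carrying the third vertex's ID; this follows immediately because two of the three vertices are new while only one can be old.
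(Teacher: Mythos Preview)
Your proposal is correct and takes essentially the same approach as the paper: both use that all edges incident to newly inserted nodes are new, then read off~\cref{batched-triangle:send-new,batched-triangle:start-listing} of~\cref{alg:batched-triangle} to show each vertex receives a $(\new,\cdot)$ message naming the third vertex and hence lists the triangle. Your presentation is a bit more symmetric (a single argument for an arbitrary $t$ rather than casework on which two vertices are new), and you spell out the non-un-listing check that the paper defers to the proof of~\cref{lem:batched-obs-1}, but the underlying argument is identical.
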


\begin{proof}
    Without loss of generality, suppose $u$ and $v$ are inserted in round $r$. Then, $u$ and $v$ sees each other 
    in their adjacency lists. Node $w$ sends $ID_u$ and $ID_v$ to both $u$ and $v$. Hence, both $u$ and $v$ lists 
    triangle $\{u, v, w\}$. Node $w$ receives $ID_u$ from $v$ and $ID_v$ from $u$ and sees both $u$ and $v$ in its
    adjacency list. Hence, node $w$ lists $\{u, v, w\}$. When all three nodes are inserted in the same round, 
    each node receive the other nodes' IDs via messages and successfully lists the triangle.
    The remaining parts of this proof is identical to the second part of the proof of~\cref{lem:batched-obs-1}.
\end{proof}

\cliqueslist*

\begin{proof}
    The key to this proof, as in the case for one update, is that there exists one node that
    lists all triangles incident to it
    in a new clique and hence lists the clique. 
    By~\cref{lem:batched-obs-1}, the node inserted in the earliest round for each $K_s$
    lists it. If all nodes in a $K_s$ are inserted in the current round, then 
    by~\cref{lem:batched-obs-2} and~\cref{obs:list-clique-triangles}, all nodes in the 
    clique lists it.

    We now show that each node which lists a clique will stop listing it when it is destroyed.
    By~\cref{lem:triangle-deletion}, at least one triangle is destroyed for every node in a clique $C$
    if an edge deletion destroys $C$. Then, a triangle is destroyed by either an
    edge deletion or a node deletion (or both). For a node deletion $v$, an edge incident to every other
    node in any triangle containing $v$ will be deleted. Thus, each node listing the triangle containing
    $v$ will stop listing it when $v$ is deleted. For any edge deletion $\{u, v\}$ that destroys
    a triangle $\{u, v, w\}$, if edges $\{u, w\}$ and $\{v, w\}$ are not deleted, then $w$ receives
    a delete message from $u$ and $v$ listing each other's ID. Then, $w$ knows that 
    triangle $\{u, v, w\}$ is destroyed and stops listing it. 
    We know every node $v$ that lists $C$ lists all of the 
    triangles composed of nodes in $C$ that are
    incident to $v$. We just showed that every node that lists a triangle successfully stops listing 
    it when it is destroyed. Then, it follows 
    that each node which lists $C$ stops listing it when it is destroyed.
    
    Since we are guaranteed $O(1)$ incident updates to every node, by~\cref{alg:batched-clique}, 
    each node sends $O(1)$ IDs for each batch of updates and the size of each message
    is still $O(\log n)$.
\end{proof}

\section{Open Questions}

We hope the observations we make in this paper will be useful for future work.
The open question we find the most interesting regarding work in this area
are techniques to list other induced subgraphs that are not cliques
in one round under very small bandwidth. A preliminary
look at listing $k$-paths and $k$-cycles 
in $1$-round under $O(1)$-bandwidth
proves to be impossible if node deletions are allowed due to the following
reason. Suppose we are given a subgraph $H$ with radius greater than $2$ and
suppose $v$ is currently listing $H$. Then a node deletion of a node $u$ that destroys the subgraph,
where $u$ is at a distance greater than $2$ from $v$, 
will not be detected by $v$ in one round.
However, perhaps results could found for other types of subgraphs and if there are no
node/edge deletions.
Another interesting open question is whether these results could be
extended to \emph{arbitrarily} large number of edge/node updates, specifically extending
our results given in the batched setting.

\section*{Acknowledgements} We are very grateful to our anonymous reviewers whose suggestions
greatly improved the presentation of our results. In particular, we thank one anonymous reviewer for the 
simple argument that $k$-path and $k$-cycle listing in one round is difficult for radius greater than $2$.

\bibliographystyle{alpha}
\bibliography{ref}
\end{document}